\documentclass[11pt]{article}
\setlength{\topmargin}{-2cm}
\setlength{\oddsidemargin}{-0.2cm}
\setlength{\evensidemargin}{-0.2cm}
\setlength{\textheight}{23cm}
\setlength{\textwidth}{16.5cm}

\usepackage{amssymb}

\usepackage{lipsum}
\usepackage{amsfonts}
\usepackage{graphicx}
\usepackage{epstopdf}
\usepackage{algorithmic}
\usepackage{amsthm}
\usepackage{tabularx,ragged2e,booktabs,caption}
\usepackage{algorithm}
\usepackage{authblk}

\newtheorem{theorem}{Theorem}
\newtheorem{definition}{Definition}
\newtheorem{lemma}{Lemma}

\newtheorem{proposition}{Proposition}
\newtheorem{corollary}{Corollary}

\newcommand{\alga}{{\bf TreeIncl1}}
\newcommand{\algb}{{\bf TreeIncl2}}
\newcommand{\proca}{{\bf ComputeSet}}
\newcommand{\lnon}{\overline}

\newlength{\boxwidth}
\setlength{\boxwidth}{\textwidth}
\addtolength{\boxwidth}{-1.5em}

\newcommand{\SSS}{\mathcal{S}}

\begin{document}

\title{New and Improved Algorithms for Unordered Tree Inclusion}
\author[1]{Tatsuya Akutsu
\thanks{Correspoding author. e-mail: takutsu@kuicr.kyoto-u.ac.jp.
Partially supported by JSPS KAKENHI \#18H04113.}}
\author[2,3]{Jesper Jansson}
\author[1]{Ruiming Li}
\author[4]{Atsuhiro Takasu}
\author[1]{Takeyuki Tamura\thanks{Partially supported by JSPS KAKENHI \#25730005.}}
\affil[1]{Bioinformatics Center,
Institute for Chemical Research, Kyoto University,
Uji, Kyoto 611-0011, Japan}
\affil[2]{Department of Computing,
The Hong Kong Polytechnic University, Hung Hom, Kowloon, Hong Kong, China}
\affil[3]{Graduate School of Informatics,
Kyoto University, Yoshida-Honmachi, Sakyo-ku, Kyoto 606-8501, Japan}
\affil[4]{National Institute of Informatics,
Chiyoda-ku, Tokyo, 101-8430, Japan}

\maketitle

\begin{abstract}
The \emph{tree inclusion problem} is, given two node-labeled trees~$P$ and~$T$
(the ``pattern tree'' and the ``target tree''), to locate every minimal subtree
in~$T$ (if any) that can be obtained by applying a sequence of
node insertion operations to~$P$.
Although the \emph{ordered} tree inclusion problem is solvable in
polynomial time, the \emph{unordered} tree inclusion problem is
NP-hard.
The currently fastest algorithm for the latter is a classic algorithm by
Kilpel\"{a}inen and Mannila from 1995 that runs in
$O(2^{2d} mn)$ time,
where $m$ and $n$ are the sizes of the pattern and target trees, respectively,
and $d$ is the degree of the pattern tree.
Here, we develop a new algorithm that 
runs in $O(2^{d} mn^2)$ time,
improving the exponential factor from $2^{2d}$ to~$2^d$
by considering
a particular type of ancestor-descendant relationships that is suitable for
dynamic programming.
We also study restricted variants of the unordered tree inclusion problem.

{\bf Keywords:}
tree inclusion, unordered trees, parameterized algorithms, dynamic programming.
\end{abstract}

\section{Introduction}

Tree pattern matching and measuring the similarity of trees are
fundamental problem areas in theoretical computer science.
One intuitive and previously well-studied measure of the similarity between
two rooted, node-labeled trees~$T_1$ and~$T_2$ is
the \emph{tree edit distance}, defined as the length of a shortest sequence of
node insertion, node deletion, and node relabeling operations that
transforms~$T_1$ into~$T_2$~\cite{bille2005survey}.
An important special variant
of the problem of computing the tree edit distance
known as the \emph{tree inclusion problem} is obtained when the only allowed
operations are node insertion operations on~$T_1$.
Here, we assume the following formulation of the problem:
given a ``pattern tree''~$P$ and a ``target tree''~$T$, locate
every minimal subtree in~$T$ (if any) that can be obtained by applying
a sequence of node insertion operations to~$P$.
(Equivalently, one may define the tree inclusion problem so that only
node deletion operations on~$T$ are allowed.)
In~1995, Kilpel\"{a}inen and Mannila~\cite{kilpelainen1995ordered} proved that
the tree inclusion problem for unordered trees is NP-hard, but solvable in
polynomial time when the degree of the pattern tree is bounded from above by
a constant.
The running time of their algorithm is
$O(d \cdot 2^{2d} \cdot mn) = O^{\ast}(2^{2d}) = O^{\ast}(4^{d})$, where
$m = |P|$, $n = |T|$, and $d$ is the degree of~$P$.
Throughout this article, the notation ``$O^{\ast}(\dots)$'' means
``$O(\dots)$'' multiplied by some function that is a polynomial in~$m$ and~$n$.
E.g., ``$O^{\ast}(2^{2d})$'' means ``$O(2^{2d} \cdot poly(m,n))$''.

Our main contribution is a new algorithm for solving
the unordered tree inclusion problem more efficiently.
More precisely, its time complexity is
$O(d \cdot 2^{d} \cdot mn^2) = O^{\ast}(2^{d})$,
which yields the first improvement in over twenty years.
Our bound is obtained by introducing the simple yet useful concept of
\emph{minimal inclusion} and considering a particular type of
ancestor-descendant relationships that turns out to be suitable for
dynamic programming.
Next, we analyze the computational complexity of unordered tree inclusion for
some restricted cases; see Table~\ref{table:complexity} for a summary of
the new findings.
We give a polynomial-time algorithm for the case where the leaves of~$P$ are
distinctly labeled and every label appears at most twice in~$T$, and
an $O^{\ast}(1.619^d)$-time algorithm for the NP-hard case where the leaves
in~$P$ are distinctly labeled and each label appears at most three times
in~$T$.
Both of these algorithms effectively utilize some techniques from
a polynomial-time algorithm for 2-SAT~\cite{aspvall1979}.
(Note that the preliminary version of this paper \cite{akutsu2018unordered}
contained a slower algorithm for the latter case running in $O^{\ast}(1.8^d)$
time.)
Finally, we derive a randomized $O^{\ast}(1.883^d)$-time algorithm for the case
where the heights of~$P$ and~$T$ are one and two, respectively, via
a non-trivial combination of our $O^{\ast}(2^d)$-time algorithm,
Yamamoto's algorithm for SAT~\cite{yamamoto2005sat},
and color-coding~\cite{alon1995color}.

\medskip

\begin{table}[h!]
\caption{The computational complexity of some special cases of
the unordered tree inclusion problem.
For any tree~$T$, $h(T)$~denotes the height of~$T$ and
$occ(T)$ the maximum number of times that any leaf label occurs in~$T$.
As indicated in the table, either all nodes or only the leaves are labeled
(the former is harder since it generalizes the latter).
Note that the fourth case is NP-hard as it generalizes the first two.
The algorithm referred to in the last case is randomized.
}
\begin{center}
\begin{tabular}{l|c|c|c} \hline
Restriction & Labels on & Complexity & Result \\ \hline
$h(T) = 2$, $h(P) = 1$, 
  & all nodes
  & NP-hard
  & Corollary~\ref{corollary:KM_hardness} \\
$occ(T) = 3$, $occ(P) = 1$
& & & \\[2mm]
$h(T) = 2$, $h(P) = 2$, 
  & leaves
  & NP-hard
  & Theorem~\ref{thm:unique_leaves_NP-complete} \\
$occ(T) = 3$, $occ(P) = 1$
& & & \\[2mm]
$occ(T) = 2$, $occ(P) = 1$
  & all nodes
  & P
  & Theorem~\ref{thm:poly} \\[2mm]
$occ(T) = 3$, $occ(P) = 1$
  & all nodes
  & $O^{\ast}(1.619^d)$ time
  & Theorem~\ref{thm:occ-3} \\[2mm]
$h(T) = 2$, $h(P) = 1$
  & all nodes
  & $O^{\ast}(1.883^d)$ time 
  & Theorem~\ref{thm:low-height} \\
\hline
\end{tabular}
\end{center}
\label{table:complexity}
\end{table}

\medskip

\subsection{Related results}

In general, tree edit distance-related problems are computationally harder
for unordered trees than for ordered trees.
A comprehensive summary of the many results that were already known in~2005
can be found in the survey 
by Bille~\cite{bille2005survey}.
Below, we briefly mention a few of these historical results along with
some more recent ones.

When $T_1$ and~$T_2$ are \emph{ordered} trees, the tree edit distance can be
computed in polynomial time.
The first algorithm to do so, invented by Tai~\cite{tai1979tree} in~1979, ran
in $O(n^{6})$ time, where $n$ is the total number of nodes in~$T_1$ and~$T_2$.
The time complexity was gradually improved upon until
Demaine et al.~\cite{demaine2009optimal} thirty years later presented
an $O(n^{3})$-time algorithm, which was proved to be worst-case optimal under
the conjecture that there is no truly subcubic time algorithm for
the all-pairs-shortest-paths problem~\cite{bringmann2018editlb}.
Pawlik and Augsten~\cite{pawlik2011rted} developed a robust algorithm whose
asymptotic complexity is less than or equal to the complexity of
the best competitors for any input instance.
In another line of research, since even $O(n^3)$ time is too slow for
similarity search and so-called join operations in XML databases, the focus has
been on approximate methods.
Garofalakis and Kumar~\cite{garofalakis2005xml} gave an algorithm for embedding
the tree edit distance in a high-dimensional $L_1$-norm space with
a guaranteed distortion, and recently, Boroujen
et al.
provided an $O(n^2)$-time $(1+\varepsilon)$-approximation
algorithm~\cite{boroujeni2019treeedit}.

In contrast, the tree edit distance problem is NP-hard for
\emph{unordered} trees~\cite{zhang1992editing}.
It is MAX SNP-hard even for binary trees in
the unordered case~\cite{zhang1994some}, which implies that it is unlikely
to admit a polynomial-time approximation scheme.
Some exponential-time algorithms for this problem variant were developed
by Akutsu et al.~\cite{akutsu2013approximation,akutsu2014efficient}.
As for parameterized algorithms, Shasha et al.~\cite{shasha1994exact} gave
an $O(4^{\ell_1 + \ell_2} \cdot \min(\ell_1,\ell_2) \cdot mn)$-time algorithm
for the problem, where $\ell_1$ and $\ell_2$ are the numbers of leaves in~$T_1$
and~$T_2$, respectively.
Taking the tree edit distance (denoted by~$k$) to be the parameter instead,
an $O^{\ast}(2.62^k)$-time algorithm for the unit-cost edit operation model was
developed
by Akutsu et al.~\cite{akutsu2011exact}.

As mentioned above, Kilpel\"{a}inen and Mannila~\cite{kilpelainen1995ordered}
proved that the unordered tree inclusion problem is NP-hard and gave
an algorithm that runs in $O(d \cdot 2^{2d} \cdot mn)$ time, where $m = |P|$,
$n = |T|$, and $d$ is the degree of~$P$.
Bille and G{\o}rtz~\cite{bille2011tree} presented a fast algorithm for the case
of ordered trees, and Valiente~\cite{valiente2005constrained} gave
a polynomial-time algorithm for a constrained version of the unordered case.
Piernik and Morzy~\cite{piernik2013partial} introduced a similar problem for
ordered trees and developed an efficient algorithm.
Finally, we remark that the special case of the tree inclusion problem in which
node insertion operations are only allowed to insert new \emph{leaves}
corresponds to a subtree isomorphism problem, which can be solved in
polynomial time even for unordered trees~\cite{matouvsek1992complexity}.

\subsection{Applications}
\label{sec:applications}

Research in tree pattern matching has led to algorithms used in
numerous practical applications over the years.
Some examples include
fast methods for querying structured text databases,
document similarity search,
natural language processing, 
compiler optimization,
automated theorem proving,
comparison of RNA secondary structures,
assessing the accuracy of phylogenetic tree reconstruction methods,
and medical image analysis~\cite{bille2011tree,icde14,kilpelainen1995ordered,valiente2005constrained}.
Recently, due to the rapid advance of AI technology, matching methods
for knowledge bases have become increasingly important.
In particular, researchers in the database community have enhanced
the basic \emph{subtree similarity search} technique to search a knowledge base
of hierarchically structured information under various definitions of
similarity;
e.g., Cohen and Or~\cite{icde14} presented a general subtree similarity search
algorithm that is compatible with a wide range of tree distance functions, and
Chang
et al.~\cite{vldb15}
proposed a top-$k$ tree matching algorithm.
In the Natural Language Processing (NLP) field,
researchers are applying deep learning techniques to NLP problems and
developing algorithms for processing parsing/dependency trees~\cite{emnlp15}.

As an example of the versatility of tree comparison algorithms,
three different tree pattern matching applications involving glycan data
from the KEGG database~\cite{kanehisa2013data},
weblogs data~\cite{zaki2005efficiently}, and bibliographical data from
ACM, DBLP, and Google Scholar~\cite{kopcke2010evaluation} were all expressed
in terms of an optimization version of the unordered tree inclusion problem
named the \emph{extended tree inclusion problem} and studied experimentally
by Mori et al.~\cite{MTJHTA_15}.
Note that for bibliographic matching, a single article usually has at most
two or three versions (e.g., preprint, conference version, and journal version),
and it is very rare that a single article includes two co-authors with
exactly the same family and given names.
Therefore, two reasonable assumptions when modeling bibliographic matching as
the tree inclusion problem
are
that the leaves of the pattern tree~$P$ are
distinctly labeled and that each label occurs at most~$c$ times in
the target tree~$T$ for some bounded value of~$c$.

Another important restriction is on the height of trees.
In entity resolution, some authors have applied tree matching
where entities are usually represented by a shallow tree.
Mori et al. \cite{MTJHTA_15} represented a bibliographic record by a tree
of height 2 and linked identical records in two different bibliographic
databases.
Konda et al. \cite{konda2016er} evaluated their entity resolution system
by using various datasets.
Movie records from IMDb\footnote{IMDb: Ratings, Reviews, and Where to Watch
the Best Movies: https://www.imdb.com} used in their experiment,
for example, were extracted from IMDb web pages in HTML format.
The fields of the movie record are included in a subtree of height 7
in the web page.
Since the HTML code contains many tags for rendering the page,
the height of trees required for the movie record is much lower.
Apart from these practical viewpoints, 
it is of theoretical interests to study restricted cases because
the unordered tree inclusion problem remains NP-hard even in 
considerably restricted
cases,
as shown in Table~\ref{table:complexity}.

\section{Definitions and notation}
\label{sec:definitions}

From here on, all trees are assumed to be rooted, unordered, and node-labeled.
Let~$T$ be a tree.
We use $r(T)$, $h(T)$, and~$V(T)$ to denote the root of~$T$, the height of~$T$,
and the set of nodes in~$T$, respectively.
For any $v \in V(T)$, $\ell(v)$~is the node label of~$v$ and $Chd(v)$~is
the set of children of~$v$.
Furthermore, $Anc(v)$~and $Des(v)$ are the sets of strict ancestors and
strict descendants of~$v$, respectively (i.e., $v$~itself is excluded from
these sets), whereas $AncDes(v) = Anc(v) \cup Des(v) \cup \{v\}$~is the set of
all ancestors of~$v$, all descendants of~$v$, and~$v$.
Also, $T(v)$~denotes the subtree of~$T$ induced by $Des(v) \cup \{v\}$.

A \emph{node insertion operation} on a tree~$T$ is an operation that
creates a new node~$v$ having any label and then:
(i)~attaches~$v$ as a child of some node~$u$ currently in~$T$
and makes $v$
become the parent of a (possibly empty) subset of the children of~$u$
instead of~$u$, so that $u$ is no longer their parent;
 or
(ii)~makes the current root of~$T$ become a child of~$v$ and lets $v$ become
the root of $T$ instead.
For any two trees~$T_1$ and~$T_2$, we say that
\emph{$T_1$~is included in~$T_2$} if there exists a sequence of
node insertion operations that, when applied to~$T_1$, yields~$T_2$.
Equivalently, $T_1$~is included in~$T_2$ if $T_1$ can be obtained by applying
a sequence of \emph{node deletion operations} (defined as the inverse of
a node insertion operation) to~$T_2$.

A \emph{mapping} between two trees~$T_1$ and~$T_2$ is
a subset $M \subseteq V(T_1) \times V(T_2)$ such that for every
$(u_1,v_1), (u_2,v_2) \in M$, it holds that:
(i)~$u_1 = u_2$ if and only if $v_1 = v_2$; and
(ii)~$u_1$ is an ancestor of~$u_2$ if and only if $v_1$ is an ancestor
of~$v_2$.
Condition (i) states that each node appears at most once in $M$,
and condition (ii) states that ancestor-descendant relations must be preserved.
A mapping~$M$ between~$T_1$ and~$T_2$ such that $|M| = |V(T_1)|$ and
$u$ and~$v$ have the same node label for every $(u,v) \in M$ is called
an \emph{inclusion mapping} (see Fig.~\ref{fig:inclusion} for an example).
It is known that $T_1$~is included in~$T_2$ if and only if there exists
an inclusion mapping between~$T_1$ and~$T_2$~\cite{tai1979tree}.
We write $T_1(u) \subset T_2(v)$ if $T_1(u)$ is included in $T_2(v)$ under
the additional condition that there exists an inclusion mapping that maps~$u$
to~$v$.
For any two trees~$T_1$ and~$T_2$, $T_1 \sim T_2$ means that $T_1$ is
isomorphic to~$T_2$, in the sense that node labels have to be preserved.

In the \emph{tree inclusion problem}, the input is two trees~$P$ and~$T$,
also referred to as the ``pattern tree'' and the ``target tree'', and
the objective is to locate every minimal subtree of~$T$ that includes~$P$,
where $T(v)$ is called a \emph{minimal subtree} if it minimally includes $P$,
the definition of which is given below.
For any instance of the tree inclusion problem, we define $m = |V(P)|$ and
$n=|V(T)|$, and let $d$ denote the degree of~$P$, i.e., the maximum number of
children of any node in~$P$.
We assume w.l.o.g. (without loss of generality) that $m \leq n$ because
otherwise $P$ cannot be included in $T$.
The following concept plays a key role in our algorithm (see
Fig.~\ref{fig:inclusion} for an illustration).

\begin{definition}
For any instance of the tree inclusion problem and any $u \in V(P)$ and
$v \in V(T)$, $T(v)$~is said to \emph{minimally include~$P(u)$} (written as
$P(u) \prec T(v)$) if $P(u) \subset T(v)$ holds and there is no $v' \in Des(v)$
such that $P(u) \subset T(v')$.
\end{definition}

We may simply use $P$ and $T$ in place of $P(u)$ and $T(v)$
if $u$ and $v$ are the roots of $P$ and $T$, respectively.
Locating every minimal subtree is reasonable because 
$P(u) \subset T(v')$ holds for all ancestors $v'$ of $v$ if
$P(u) \prec T(v)$ holds.

\begin{figure}[t!]
\centering
\includegraphics[width=10cm]{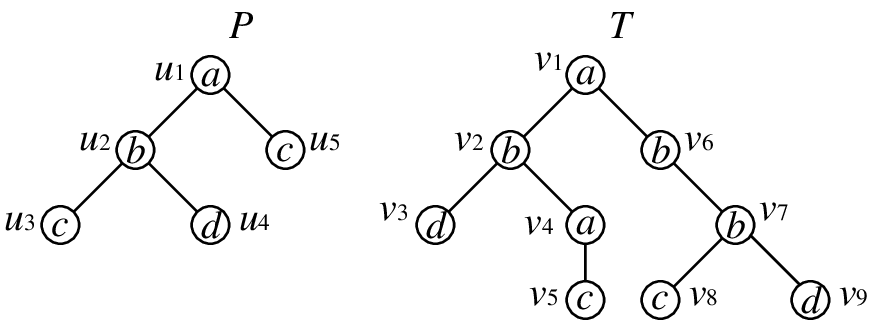}
\caption{An example of unordered tree inclusion.
Here, $P \subset T$ holds by an inclusion mapping 
$M = \{(u_1,v_1),(u_2,v_2),
(u_3,v_5),
(u_4,v_3),(u_5,v_8)\}$.
$P(u_2) \subset T(v_2)$,
$P(u_2) \subset T(v_6)$, and~$P(u_2) \subset T(v_7)$ hold as well.
Furthermore, $P \prec T$, $P(u_2) \prec T(v_2)$, and $P(u_2) \prec T(v_7)$
hold, but $P(u_2) \prec T(v_6)$~does not hold.}
\label{fig:inclusion}
\end{figure}

\begin{proposition}
Given any instance of the tree inclusion problem and any $u \in V(P)$ and
$v \in V(T)$ with $Chd(u) = \{u_1,\ldots,u_d\}$, it holds that
$P(u) \subset T(v)$ if and only if the following conditions are satisfied:
\begin{itemize}
\item [(1)] $\ell(u) = \ell(v)$;
\item [(2)] $v$ has a set of descendants $D(v) = \{v_1,\ldots,v_d\}$ such that
$v_i \notin Des(v_j)$ for every $i \neq j$; and
\item [(3)] there exists a bijection $\phi$ from $Chd(u)$ to $D(v)$ such that
$P(u_i) \prec T(v_i)$ holds for every $i \in \{1,2,\ldots,d\}$.
\end{itemize}
\label{prop:central}
\end{proposition}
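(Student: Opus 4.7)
The plan is to prove both directions by unpacking the mapping characterization of inclusion stated just before the proposition, and using the fact that the descendant relation is finite and hence has minimal elements.

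For the forward direction, assume $P(u)\subset T(v)$ witnessed by a label-preserving mapping $M$ with $(u,v)\in M$. For each $u_i\in Chd(u)$, let $w_i\in V(T(v))$ be the image of $u_i$ under $M$. Because the $u_i$'s are pairwise incomparable in $P$ (they are siblings), the ancestor-preservation clause of the mapping forces the $w_i$'s to be pairwise incomparable in $T$, and each $w_i\in Des(v)$ since $u_i\in Des(u)$. Restricting $M$ to $V(P(u_i))\times V(T(w_i))$ yields a mapping witnessing $P(u_i)\subset T(w_i)$, so condition~(1) holds (labels match along $M$). To upgrade $\subset$ to $\prec$, I would descend: since $V(T(w_i))$ is finite, pick $v_i$ to be any descendant (or $w_i$ itself) that is minimal with respect to $\subset$ along the chain of descendants — formally, choose $v_i$ so that $P(u_i)\prec T(v_i)$ with $v_i\in V(T(w_i))$. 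Pairwise incomparability is preserved under this descent because $T(w_i)\cap T(w_j)=\emptyset$ for $i\neq j$, so the resulting $D(v)=\{v_1,\dots,v_d\}$ verifies (2), and $\phi(u_i):=v_i$ verifies~(3).

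For the backward direction, assume (1)--(3). For each $i$ pick a mapping $M_i$ witnessing $P(u_i)\subset T(v_i)$ (which follows from $P(u_i)\prec T(v_i)$). Define
\[
M := \{(u,v)\}\cup\bigcup_{i=1}^{d} M_i.
\]
I would verify the two conditions of a mapping. Injectivity on both sides follows because the subtrees $P(u_i)$ are pairwise disjoint in $P$ and, by condition~(2), the subtrees $T(v_i)$ are pairwise disjoint in $T$; moreover $u\notin V(P(u_i))$ and $v\notin V(T(v_i))$ since $v_i\in Des(v)$. For the ancestor condition: pairs lying inside a single $M_i$ inherit it from $M_i$; for a pair with one coordinate equal to $u$, the preimage is an ancestor and the image $v$ is an ancestor of everything in $T(v_i)$; and for pairs drawn from different $M_i$ and $M_j$, both preimages and images are incomparable by sibling disjointness in $P$ and pairwise incomparability of the $v_i$'s. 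Label preservation of $M$ then follows from label preservation of each $M_i$ together with condition~(1), so $|M|=|V(P(u))|$ and $M$ witnesses $P(u)\subset T(v)$.

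The main obstacle is really only in the forward direction, and it is minor: one must argue that after fixing images $w_i$ of the children, we can \emph{independently} slide each $w_i$ down to a minimal witness $v_i$ without destroying pairwise incomparability. This is handled by observing that any descendant of $w_i$ lies in $T(w_i)$, and $T(w_i)\cap T(w_j)=\emptyset$ whenever $w_i,w_j$ are incomparable, so the descent in each subtree is carried out in isolation. Everything else is a routine verification of the two mapping axioms.
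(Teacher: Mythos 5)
Your proof is correct and takes essentially the same route as the paper: the only non-obvious content is upgrading each child's image from an inclusion witness ($\subset$) to a minimal one ($\prec$), which the paper does by iterated replacement of $\phi'(u_i)$ with a descendant and you do by directly picking a minimal element inside each pairwise-disjoint subtree $T(w_i)$ — the same key observation that descending within these disjoint subtrees preserves pairwise incomparability. You additionally spell out the converse direction and the mapping-axiom verifications that the paper dismisses as obvious, which is a matter of detail rather than a different argument.
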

\begin{proof}
Suppose that Conditions (1)-(3) are satisfied.
Condition (3) implies that there exists an injection mapping $\phi'$ between 
the forest induced by $u_1,\ldots,u_d$ and their descendants
and the forest induced by $v_1,\ldots,v_d$ and their descendants
such that $\phi'(u_i)=v_i$.
Let $\phi'' = \phi' \cup \{(u,v)\}$.
Since $u_1,\ldots,u_d$ are the children of $u$ and
$v_1,\ldots,v_d$ are descendants of $v$,
$\phi''$ is an inclusion mapping and thus $P(u) \subset T(v)$ holds.

Conversely, suppose that $P(u) \subset T(v)$ holds, which means
that there exists an inclusion mapping $\phi$ from $P(u)$ to $T(v)$
with $\phi(u)=v$.
Let $w_i=\phi(u_i)$ for $i=1,\ldots,d$.
Then, $w_i \notin Des(w_j)$ holds for every $i \neq j$
because $\phi$ is an inclusion mapping.
Furthermore,
for each $w_i$, there must exist $v_i \in \{w_i\} \cup Des(w_i)$
such that $P(u_i) \prec T(v_i)$ holds with an inclusion mapping $\phi_i$
from $P(u_i)$ to $T(v_i)$ satisfying $\phi_i(u_i)=v_i$.
Note that $v_i \notin Des(v_j)$ holds for every $i \neq j$ because
$w_i \notin Des(w_j)$ holds for every $i \neq j$.
Let $\phi' = \{(u,v)\} \cup \phi_1 \cup \ldots \phi_d$.
Condition (1) is satisfied because $(u,v) \in \phi'$.
Here we let $D(v)=\{v_1,\ldots,v_d\}$.
Then, Condition (2) is satisfied as stated above.
Condition (3) is also satisfied because
$\phi'(u_i) = v_i$ holds for all $i=1,\ldots,d$.
\end{proof}

Proposition~\ref{prop:central} essentially states that the children of~$u$
must be mapped to descendants of~$v$ that do not have ancestor-descendant
relationships.
Since $P$ is included in~$T$ if and only if there exists a $v \in V(T)$ with
$P \prec T(v)$, we need to determine if $P(u) \prec T(v)$, assuming that
whether $P(u_j) \prec T(v_i)$ holds is known for all $(u_j,v_i)$  with
$u_j \in Des(u) \cup \{ u \}$, $v_i \in Des(v) \cup \{ v \}$, and
$(u_j,v_i) \neq (u,v)$.
This assumption is satisfied if we apply a dynamic programming procedure
to determine if $P \prec T(v)$, 
using an $O(mn)$ size table and
following any partial ordering on $(u,v)$s in $V(P) \times V(T)$ such that
$(u,v)$ precedes $(u',v')$ if and only if
$u' \in Des(u) \cup \{u\}$,
$v' \in Des(v) \cup \{v\}$, and
$(u',v') \neq (u,v)$.

\begin{proposition}
Suppose that $P(u) \prec T(v)$ can be determined in $O(f(d,m,n))$ time,
assuming that whether $P(u_j) \prec T(v_i)$ holds is known for
all pairs $(u_j,v_i)$ such that
$(u_j,v_i) \in V(P(u)) \times V(T(v)) \setminus \{(u,v)\}$.
Then the unordered tree inclusion problem can be solved in
$O(f(d,m,n) \cdot mn)$ time by using a bottom-up dynamic programming procedure.
\label{prop:bottom-up}
\end{proposition}

\section{An $O(d \cdot 2^d \cdot mn^2)$-time algorithm}
\label{sec:improved}

The core of Kilpel\"{a}inen and Mannila's
algorithm~\cite{kilpelainen1995ordered} for unordered tree inclusion is
the computation of a set~$S(v)$ for each node $v \in V(T)$, also called
the \emph{match system for target node~$v$}.
In their paper,
$S(v)$~was originally defined as a set of subsets of nodes from~$P$,
where each such subset
consists of the root nodes in a subforest of~$P$ that is included in~$T(v)$.
However, 
$S(v)$ was restricted to subsets of $Chd(u)$ for a single node $u$ in
$P$
when the bounded outdegree case was considered.
We employ this restricted definition in this paper and
define~$S(v)$ for any fixed $u \in V(P)$ by:
\begin{eqnarray*}
S(v) & = & \{ U \subseteq Chd(u) \,|\, P(U) \subset T(v) \},
\end{eqnarray*}
where $P(U)$ is the forest induced by the nodes in~$U$ and their descendants
and $P(U) \subset T(v)$ means that every tree
from~$P(U)$ is included in~$T(v)$
without overlap
(i.e., $T(v)$ can be obtained from~$P(U)$ by node insertion operations).
For details, see~\cite{kilpelainen1995ordered}.

Kilpel\"{a}inen and Mannila's algorithm~\cite{kilpelainen1995ordered} computes
the $S(v)$-sets in a bottom-up order.
It fixes an arbitrary left-to-right ordering of the nodes of~$T$ (the ordering
will not affect the correctness).
Precisely, the left-to-right ordering is determined as follows.
We assume that for each node having two or more children,
a left-to-right ordering is given to the children.
For any two nodes $v_i, v_j \in V(T)$ (resp., $v_i,v_j \in V(P)$)
that do not have any ancestor-descendant relationship,
let $v$ be the lowest common ancestor, which is uniquely determined.
For any descendant $v_k$ of $v$,
let $v_k'$ be the child of $v$ such that $v_k$ is a descendant of $v_k'$
or $v_k=v_k'$.
Then, $v_i$ is left (resp., right) of $v_j$ if and only if
$v_i'$ is a left (resp., right) sibling of $v_j'$.
Note that left-right relationships are defined for nodes
only if they do not have any ancestor-descendant relationship.
Below, we denote ``$v_i$~is left of~$v_j$'' by $v_i \triangleleft v_j$.
To compute~$S(v)$, their algorithm performs the following operation from left
to right for the children $v_1,\ldots,v_l$ of~$v$:
\begin{eqnarray*}
S & := & \{A \cup B \,|\, A \in S,\, B \in S(v_i) \},
\end{eqnarray*}
starting with $S = \emptyset$, and then $S(v)$ is assigned the resulting~$S$.
Clearly, the size of~$S(v)$ is no greater than~$2^d$.
However, this way of updating~$S$ causes an $\Omega(2^{2d})$-factor in
the running time because it examines $\Omega(2^d) \times \Omega(2^d)$
set pairs.
To avoid this bottleneck, we need a new approach for computing~$S(v)$,
explained next.
We shall focus on how to determine if $P(u) \prec T(v)$ holds for
a fixed $(u,v)$ because this part is crucial for reducing the time complexity.

Assume w.l.o.g.
that $u$ has $d$~children and
write $Chd(u) = \{u_1,\ldots,u_d\}$.
To simplify the presentation, we will assume until the end of this section that
$P(u_i) \sim P(u_j)$ does not hold for any $u_i, u_j \in Chd(u)$ with
$u_i \neq u_j$.
For any $v_i \in V(T(v))$, define $M(v_i)$ by:
\begin{eqnarray*}
M(v_i) & = &  \{ u_j \in Chd(u) \,|\, P(u_j) \prec T(v_i) \}.
\end{eqnarray*}
For example, $M(v_0) = \emptyset$, $M(v_2)=\{u_C\}$,
and $M(v_3)=\{u_D,u_E\}$ in Fig.~\ref{fig:key-idea}.
Note that $M(v_i)$ is known for all descendants $v_i$ of $v$ before testing
$P(u) \prec T(v)$ and does not change during the course of this testing.
For any $v_i \in V(T(v))$, $LF(v,v_i)$~denotes the set of nodes in~$V(T(v))$
each of which is left of~$v_i$
(see Fig.~\ref{fig:key-idea} for an example).
Next, define $S(v,v_i)$ by:
\begin{eqnarray*}
S(v,v_i) & = & \{ U \subseteq Chd(u) \,|\, P(U) \subset T(LF(v,v_i)) \} \cup \\
& & \{ U \subseteq Chd(u) \,|\, (U=U' \cup \{u_j\}) \land (P(U') \subset T(LF(v,v_i))) \land\\
& & ~~~~~~~~~~~~~~~~~~~~~~~~~~~~~~~~~~~~~~~~~~~~~(u_j \in M(v_i)) \}
\end{eqnarray*}
where $T(LF(v,v_i))$ is the forest induced by the nodes in $LF(v,v_i)$ and
their descendants.
Note that $P(\emptyset) \subset T(...)$ always holds.
Note also that each element of $S(v,v_i)$ is a subset of the children of $u$
that are included in the forest induced by the nodes left to $v_i$ and
in $V(T(v_i))$ under the condition that there exists at most one child $u_j$
such that $P(u_j)$ is included in $T(v_i)$ in the corresponding
inclusion mapping.
The motivation for introducing $S(v,v_i)$ is that Lemma~\ref{lemma:S_v}
below will allow us to recover $S(v)$ from a collection of $S(v,v_i)$-sets,
and the $S(v,v_i)$-sets can be computed efficiently with dynamic programming.
We explain $S(v,v_i)$ using an example based on Fig.~\ref{fig:key-idea}.
Suppose we have the relations
$P(u_A) \prec T(v_1)$,
$P(u_B) \prec T(v_1)$,
$P(u_C) \prec T(v_2)$,
$P(u_D) \prec T(v_3)$,
$P(u_E) \prec T(v_3)$,
$P(u_D) \prec T(v_4)$,
and $P(u_F) \prec T(v_4)$.
Then, the following holds:
\begin{tabbing}
$S(v,v_0) = \{~\emptyset ~\}$, \\
$S(v,v_1) = \{~\emptyset,~\{u_A\},~\{u_B\} ~\}$, \\
$S(v,v_2) = \{~\emptyset,~\{u_C\}~\}$, \\
$S(v,v_3) = \{~\emptyset,~\{u_D\},~\{u_E\} ~\}$, \\
$S(v,v_4) = \{~\emptyset,~\{u_D\},~\{u_E\},~\{v_F\},~\{u_D,u_E\},~\{u_D,u_F\},
~\{u_E,u_F\} ~\}$.
\end{tabbing}

\begin{figure}[t!]
\centering
\includegraphics[width=13cm]{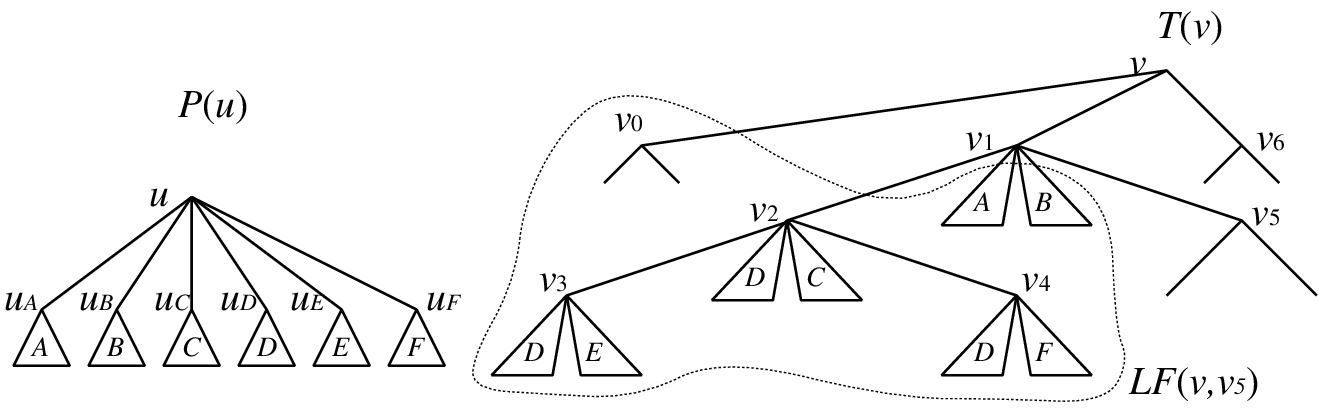}
\caption{An example.
A triangle~$X$ attached to~$v_i$ means that $P(u_X) \subset T(v_i)$.
Note that the triangle~$D$ appears at~$v_2$, $v_3$, and~$v_4$.
However, $P(u_D) \prec T(v_2)$ does not hold since it does not satisfy
the minimality condition.
Therefore, $u_D$~may be matched to~$v_3$ or~$v_4$, but $u_D$~will never be
matched to~$v_2$ in \alga.}
\label{fig:key-idea}
\end{figure}

\bigskip

Next, we present a dynamic programming-based algorithm named \alga, for
determining if $P(u) \prec T(v)$.
To compute all the $S(v,v_i)$-sets, we construct
a DAG (directed acyclic graph) $G(V,E)$ from~$T(v)$, as illustrated in
Fig.~\ref{fig:dag}.
Here, $V$~is defined by $V = V(T(v)) - \{ v \}$, and $E$ is defined by
$E = \{ (v_i,v_j) \,|\, v_i \triangleleft v_j \}$.
We define $Pred(v_i)$ by
$Pred(v_i) = \{v_j \,|\, (v_j,v_i) \in E \}$,
meaning the set of the ``predecessors'' of~$v_i$,
and also being equivalent to $LF(v,v_i)$.
\alga~traverses $G(V,E)$ so that node~$v_i$ is visited only after all of
its predecessors have been visited, at which point it runs
the procedure \proca$(v,v_i)$~below
to compute and store~$S(v,v_i)$ for this~$v_i$.
Recall that $M(v_i) = \{ u_j \in Chd(u) \,|\, P(u_j) \prec T(v_i) \}$.

\begin{figure}[t!]
\centering
\includegraphics[width=7cm]{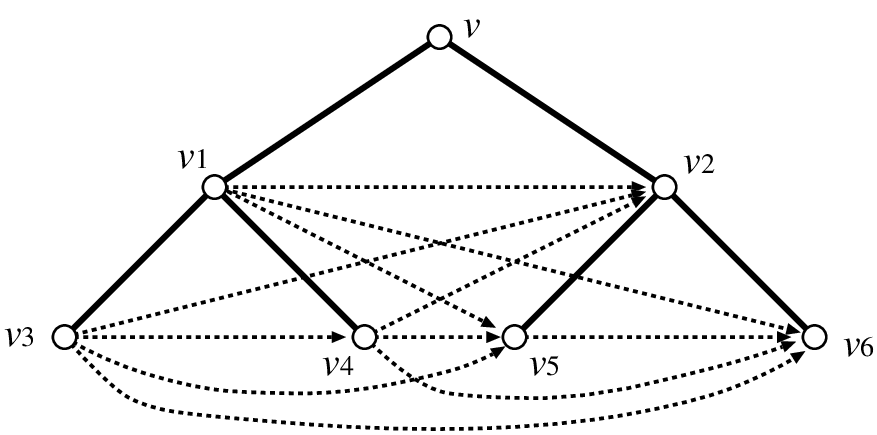}
\caption{Example of the DAG $G(V,E)$ constructed from~$T(v)$, where
$v \notin V$, $E$~is shown by dotted arrows, and $T(v)$~is shown by
bold lines.}
\label{fig:dag}
\end{figure}

\bigskip
\noindent
Procedure \proca$(v,v_i)$:

\begin{itemize}
\item [(1)\,\,\,]
If $Pred(v_i) = \emptyset$ then
$S(v,v_i) := \{ \emptyset \} \cup \{ \{u_h\} \,|\, u_h \in M(v_i) \}$
\item [(2)\,\,\,]
Else:
\item [(2a)]
~ ~ $S_0(v_i) := \bigcup_{v_j \in Pred(v_i)} S(v,v_{j})$
\item [(2b)]
~ ~ $S(v,v_i) := S_0(v_i) \cup
\{ S \cup \{ u_h \} \,|\, u_h \in M(v_i), S \in S_0(v_i) \}$
\end{itemize}

Finally, after $G(V,E)$ has been completely traversed, \alga~assigns
$S(v) := \bigcup_{v_i \in Des(v)} S(v,v_i)$.
Then $P(u)$ is included in $T(v)$ with $u$ corresponding to~$v$ if and only if
$u$~and $v$ have the same label and $Chd(u) \in S(v)$.
Note that $S(v)=\emptyset$ holds for each $v$ if $Chd(u)=\emptyset$.

\begin{lemma}
Procedure \proca$(v,v_i)$ correctly computes $S(v,v_i)$s, and
$S(v) = \cup_{v_i \in Des(v)} S(v,v_i)$.
\label{lemma:S_v}
\end{lemma}

\begin{proof}
First we show that \proca$(v,v_i)$ correctly computes $S(v,v_i)$s.
It is seen from Proposition~\ref{prop:central} that
$U \in S(v,v_i)$ holds for $U=\{u_{i_1},\ldots,u_{i_k}\} \subseteq Chd(u)$
($U = \emptyset$ if $k=0$)
if and only if there exists a sequence of nodes
$(v_{j_1},v_{j_2},\ldots,v_{j_k})$
such that
$P(u_{i_p}) \prec T(v_{j_p})$ holds for all $p=1,\ldots,k$ and 
$v_{j_1} \triangleleft v_{j_2} \triangleleft \cdots \triangleleft v_{j_k}$
(by appropriately renumbering indices of $u_{i_1},\ldots,u_{i_k}$),
where $v_{j_k} = v_i$ or $v_{j_k} \triangleleft v_i$.
On the other hand, it is seen from \proca$(v,v_i)$ that
this procedure examines all possible sequences such that
$v_{j_1} \triangleleft v_{j_2} \triangleleft \cdots \triangleleft v_{j_{k'}}$
with $v_{j_{k'}} = v_i$ or $v_{j_{k'}} \triangleleft v_i$,
and adds at most one $u_h \in M(v_{j_p})$ to each set in $S_0(v_{j_p})$.
It is also seen that the procedure and the above discussion that
$S_0(v_{j_p})$ consists of $U$s such that
$U \subseteq Chd(u)$ and $P(U) \subset T(LF(v,v_{j_p}))$.
Therefore, we can see from the definition of $M(\ldots)$ that
\proca$(v,v_i)$ correctly computes $S(v,v_i)$s.

Then we show the second statement of the lemma.
Let $U \in S(v)$ and $d_U=|U|$.
Let $\phi$ be an injection from $U$ to $Des(v)$ giving
an inclusion mapping for $P(U) \subset T(v)$,
which is the one guaranteed by Proposition~\ref{prop:central}.
Let $\{v_1',\ldots,v_{d_U}'\} = \{ \phi(u_j) | u_j \in U \}$,
where $v_1' \triangleleft v_2' \triangleleft \cdots \triangleleft v_{d_U}'$.
Then,
$v_i' \in LF(v,v_{i+1}')$ and $v_i' \in LF(v,v_{d_U}')$ hold
for all $i=1,\ldots,d_U-1$.
Furthermore, $P(u_j) \prec T(v_i')$ holds for $v_i' = \phi(u_j)$.
Therefore, $U \in S(v,v_{d_U}')$.

It is straightforward to see that $S(v,v_i)$ does not contain any element
not in~$S(v)$.
\end{proof}

The overall procedure of \alga is given by the pseudocode of
{\bf Algorithm}~\ref{alg:treeincl1}.
In this procedure, we traverse nodes in both $P$ and $T$ from left to right
n the postorder (i.e., leaves to the root).
We maintain $Min(u)$ for $u \in V(P)$ (resp., $Min(v)$ for $v \in V(T)$)
that consists of the currently available nodes $v'$ (resp., $u'$) such that
$P(u) \prec T(v')$ (resp., $P(u') \prec T(v)$).

\begin{algorithm}
\caption{\alga$(P,T)$}
\label{alg:treeincl1}
\begin{algorithmic}
\FOR{all $v \in V(P) \cup V(T)$}
\STATE{$Min(v):=\emptyset$}
\ENDFOR
\FOR{all $u \in V(P)$ in postorder}
\FOR{all $v \in V(T)$ in postorder}
\STATE{$M(v):=\{ u_j \in Chd(u)| u_j \in Min(v)\}$}
\FOR{all $v_i \in Des(v)$ in postorder}
\STATE{{\bf ComputeSet$(v.v_i)$}}
\ENDFOR
\STATE{$S(v) := \cup_{v_i \in Des(v)} S(v,v_i)$}
\IF{$Chd(u) \in S(v)$ \AND $u \notin Min(v_i)$ for all $v_i \in Des(v)$ \AND $\ell(u)=\ell(v)$}
\STATE{$Min(v):=Min(v)\cup\{u\}$}
\STATE{$Min(u):=Min(u)\cup\{v\}$}
\ENDIF
\ENDFOR
\ENDFOR
\RETURN $Min(r(P))$
\end{algorithmic}
\end{algorithm}


\begin{lemma}
\alga~outputs the set of all nodes $v$ such that
$P \prec T(v)$ in $O(d \cdot 2^d \cdot m n^3)$ time using
$O(d \cdot 2^d \cdot n + mn)$ space.
\label{lemma:alga}
\end{lemma}

\begin{proof}
Since the correctness follows from Lemma~\ref{lemma:S_v},
we analyze the time complexity.

The sizes of the $S(v)$, $S(v,v_{i_j})$s, and $S_0(v_i)$s are
$O(d \cdot 2^d)$, where we can use a simple bit vector of size $O(d)$ to
represent each subset of $U$.
The computation of each of these sets takes $O(d \cdot 2^d \cdot n)$ time.
Since the number of $S(v,v_{i_j})$s and $S_0(v_i)$s per
$(u,v) \in V(P) \times V(T)$ are $O(n)$,
the total computation time for $S(v,v_{i_j})$s per $(u,v)$ 
is $O(d \cdot 2^d \cdot n^2)$.
Hence, the total computation time for computing $S(v)$s for
all $(u,v)$s is $O(d \cdot 2^d \cdot m n^3)$.

Since the size of each $S(v,v_i)$ is $O(d \cdot 2^d)$ and we need to maintain
$S(v,v_i)$ for $v_i \in Des(v)$ per $(u,v)$,
$O(d \cdot 2^d \cdot n)$ space is enough to maintain $S(v,v_i)$s.
Note that we can re-use the same space for different $(u.v)$s.

The time needed for other operations can be analyzed as follows.
We can use simple bit vectors to maintain $Min(u)$s and $Min(v)$s,
which need $O(mn)$ space in total and $O(1)$ time per addition of an element
or checking of the membership.
Therefore, the total computation time required to maintain $Min(u)$s and $Min(v)$s is $O(mn)$.
Furthermore, $M(v)$ can be computed in $O(d)$ time per $(u,v)$ and thus
the total time to compute $M(v)$s is $O(dmn)$, and
``$u \notin Min(v_i)$ for all $v_i \in Des(v)$'' can be checked in
$O(|Des(v)|) \leq O(n)$ time per $(u,v)$ and thus
the total computation time needed for this checking is $O(mn^2)$.

Therefore, the time and space complexities
of \alga~are $O(d \cdot 2^d \cdot m n^3)$ and $O(d \cdot 2^d \cdot n + mn)$,
respectively.
\end{proof}

\textbf{Remark:}
If there exist $u_i, u_j \in Chd(u), u_i \neq u_j$ such that
$P(u_i) \sim P(u_j)$, we treat each element in $S(v)$, $S(v,v_{i_j})$s,
and $S_0(v_i)$s as a multiset
where any $u_i$ and $u_j$ such that $P(u_i) \sim P(u_j)$
are identified and
the multiplicity of $u_i$ is bounded by the number of $P(u_j)$s isomorphic
to $P(u_i)$.
Then, since $|Chd(u)| \leq d$ for all $u$ in $P$,
the size of each multiset is at most $d$ and
the number of different multisets is not greater than $2^d$.
Therefore, the same time complexity result holds.
(The same arguments can be applied to the following sections.)
Note that by treating~$u_i$ and~$u_j$ separately, we do not need to modify
the algorithm.

\begin{figure}[t!]
\centering
\includegraphics[width=14cm]{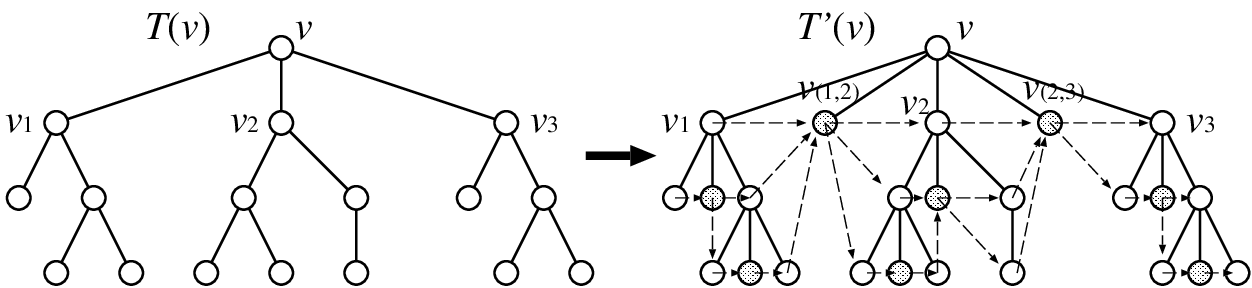}
\caption{Example of $T'(v)$ and $G'(V',E')$.
$E'$~is shown by dashed arrows.}
\label{fig:new-tree}
\end{figure}

Next, we discuss how to improve the efficiency of \alga.
Actually, to compute $S_0(v_i)$, it is not necessary to consider all of
the $v_{i_j}$s that are left of~$v_i$.
Instead, we can construct a tree $T'(v)$ from a given~$T(v)$ according to
the following rule (see Fig.~\ref{fig:new-tree} for an illustration):
\begin{itemize}
\item
For each pair of consecutive siblings $(v_i,v_j)$ in $T(v)$, add
a new sibling (leaf) $v_{(i,j)}$ between $v_i$ and~$v_j$.
\end{itemize}
Newly added nodes are called \emph{virtual nodes}.
All virtual nodes have the same label that does not appear in $P$,
to ensure that no $u \in V(P)$ is in $M(v_{(i,j)})$.
%
We then construct a DAG $G'(V',E')$ on $V'=V(T'(v))$ where
$(v_i,v_j) \in E'$ if and only if one of the following holds:
\begin{itemize}
\item $v_j$ is a virtual node, and $v_i$ is in the rightmost path
of~$T'(v_{j_1})$, where $v_j = v_{(j_1,j_2)}$; or
\item $v_i$ is a virtual node, and $v_j$ is in the leftmost path
of~$T'(v_{i_2})$, where $v_i = v_{(i_1,i_2)}$.
\end{itemize}
By replacing $G(V,E)$ by~$G'(V',E')$ in \alga~(and keeping all other steps
intact),
we obtain what we call \algb.
Note that in \algb, $v_{(i,j)}$s are treated in the same ways as for $v_i$s
and thus we need not introduce the definitions for such terms
as $S(v.v_{(i,j)})$ and $LF(v,v_{(i,j)})$
nor change the definition of $S(v.v_i)$.

\begin{figure}[t!]
\centering
\includegraphics[width=10cm]{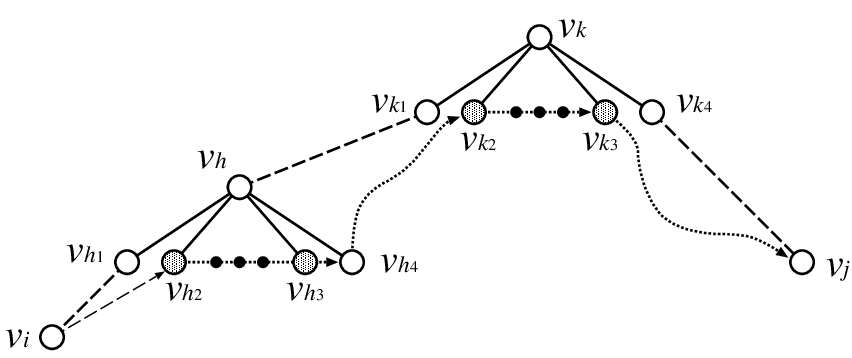}
\caption{Illustration of a path from $v_i$ to $v_j$ in the proof
of Lemma~\ref{lemma:algb}.}
\label{fig:path}
\end{figure}

\begin{lemma}
\algb~computes $S(v,v_i)$s for all $v_i \in Des(v)$ in
$O(d \cdot 2^d \cdot n)$ time per $(u,v) \in V(P) \times V(T)$.
\label{lemma:algb}
\end{lemma}
\begin{proof}
First we prove that there exists a path in $G'(V',E')$
from $v_i \in V$ to $v_j \in V$ if and only if
$v_i \triangleleft v_j$ (see also Fig.~\ref{fig:path}).
It can be seen
from the definition of the left-right relationship
that if $(v_i,v_k) \in E'$ and $(v_k,v_j) \in E'$
where $v_i,v_j \in V$ and $v_k$ is a virtual node,
then $v_i \triangleleft v_j$.
Since virtual nodes and non-virtual nodes appear alternatively
in every path in $G'(V',E')$,
the ``only if'' part holds.
Suppose that $v_i \triangleleft v_j$ holds for $v_i,v_j \in V$.
Let $v_k$ be the lowest common ancestor of $v_i$ and $v_j$.
We assume w.l.o.g. that $v_i$ or $v_j$ is not a child of $v_k$
because the other cases
can be proved in the same way.
Let $v_{k_1},v_{k_2},v_{k_3},v_{k_4}$ be children of $v_k$
such that $v_{k_1} \in Anc(v_i)$,
$(v_{k_1},v_{k_2}) \in E'$,
$(v_{k_3},v_{k_4}) \in E'$,
and 
$v_{k_4} \in Anc(v_j)$,
where $v_{k_2}$ and $v_{k_3}$ are virtual nodes and can be the same node.
We show that there exists a path in $G'(V',E')$ from $v_i$ to $v_{k_2}$.
Let $v_h$ be the lowest ancestor of $v_i$ that has
children $v_{h_1},v_{h_2},v_{h_3},v_{h_4}$
such that $v_{h_1} \in Ans(v_i)$,
$(v_{h_1},v_{h_2}) \in E'$,
$(v_{h_3},v_{h_4}) \in E'$,
and
$v_{h_4}$ ($\neq v_{h_1}$) is the rightmost child of $v_h$,
where $v_{h_2}$ and $v_{h_3}$ are virtual nodes and can be the same node.
Then, $(v_i,v_{h_2}) \in E'$ holds from the construction of $G'(V',E')$
and thus there exists a path from $v_i$ to $v_{h_4}$.
We can repeat this procedure by regarding $v_{h_4}$ as $v_i$, and so on,
from which it follows that
there exists a path in $G'(V',E')$ from $v_i$ to $v_{k_2}$.
It is also seen from the symmetry on the left-right relationship
that there exists a path in $G'(V',E')$ from $v_{k_3}$ to $v_j$.
Furthermore, there clearly exists a path in $G'(V',E')$ from $v_{k_2}$ to $v_{k_3}$, which completes
the proof of the ``if'' part.

Moreover, from the above discussion, it can be seen that
\algb~examines the same set of sequences 
$v_{j_1} \triangleleft v_{j_2} \triangleleft \cdots \triangleleft v_{j_{k'}}$
as \alga~examines when ignoring virtual nodes.
Furthermore, 
addition of an element is not performed at any virtual node, and
no element is deleted at any virtual or non-virtual node $v$
in constructing $S(v)$.
Therefore, \algb~correctly computes $S(v,v_i)$s.

Next
we analyze the time complexity.
We can see that $|E'| = O(n)$ since:
\begin{itemize}
\item $|V(T'(v))| = O(n)$;
\item each non-virtual node in $G'(V',E')$ has at most one incoming edge and
at most one outgoing edge; and
\item each
new edge
connects a non-virtual node and virtual node.
\end{itemize}
Therefore, the total number of set operations is $O(d \cdot 2^d \cdot n)$, and
the lemma follows.
\end{proof}


From Lemmas \ref{lemma:alga} and \ref{lemma:algb},
we have the following main theorem.

\begin{theorem}
Unordered tree inclusion can be solved in $O(d \cdot 2^d \cdot mn^2)$ time
and $O(d \cdot 2^d \cdot n + mn)$ space.
\label{thm:main}
\end{theorem}

If we use the height~$h(T)$ of a tree~$T$ as an additional parameter,
we can express the time complexity as $O(d \cdot 2^d \cdot h(T) \cdot mn)$
because
the time complexity is represented in this case as
$O(m \sum_{v \in V(T)} d \cdot 2^d \cdot |T(v)|)$
and $\sum_{v \in V(T)} |T(v)| \leq (h(T)+1)n$
hold.
This bound is better than
the one by Kilpel\"{a}inen and Mannila~\cite{kilpelainen1995ordered}
when $d$ is
large (to be precise, when $d > c \log(h(T))$ for some constant~$c$).

\section{NP-hardness of the case of pattern trees with unique leaf labels}
\label{sec:hardness}

For any node-labeled tree~$T$, let $L(T)$ be the set of all leaf labels in~$T$.
For any $c \in L(T)$, let $occ(T,c)$ be the number of times that~$c$
occurs in~$T$, and define $occ(T) = \max_{c \in L(T)} occ(T,c)$.

The decision version of the tree inclusion problem is the problem of
determining whether~$T$ can be obtained from~$P$ by applying a sequence of
node insertion operations.
Kilpel\"{a}inen and Mannila~\cite{kilpelainen1995ordered} proved that
the decision version of unordered tree inclusion is NP-complete by a reduction
from Satisfiability.
In their reduction, the clauses in a given instance of Satisfiability are used
to label the non-root nodes in the constructed trees~$P$ and~$T$;
in particular, for every clause~$C$, each literal in~$C$ introduces one node
in~$T$ whose node label represents~$C$.
(See the proofs of Lemma~7.2 and Theorem~7.3
in~\cite{kilpelainen1995ordered} by Kilpel\"{a}inen and Mannila
for details.)
By using 3-SAT instead of Satisfiability in their reduction, every clause will
determine the label of at most three nodes in~$T$, so we immediately have:

\begin{corollary}
\label{corollary:KM_hardness}
The decision version of the unordered tree inclusion problem is NP-complete
even if restricted to instances where
$h(T) = 2$, $h(P) = 1$, $occ(T) = 3$, and $occ(P) = 1$.
\end{corollary}

In Kilpel\"{a}inen and Mannila's reduction, the labels assigned to
the internal nodes of~$T$ are significant.
Here, we consider the computational complexity of the special case of
the problem where all internal nodes in~$P$ and~$T$ have the same label, or
equivalently, where only the leaves are labeled.
The next theorem is the main result of this section.

\begin{theorem}
\label{thm:unique_leaves_NP-complete}
The decision version of the unordered tree inclusion problem is NP-complete
even if restricted to instances where
$h(T) = 2$, $h(P) = 2$, $occ(T) = 3$, $occ(P) = 1$, and all internal nodes
have the same label.
\end{theorem}
\begin{proof}
Membership in NP was shown in the proof of
Theorem~7.3 by Kilpel\"{a}inen and Mannila~\cite{kilpelainen1995ordered}.
Next, to prove the NP-completeness, we present a reduction from
\textsc{Exact Cover by 3-Sets (X3C)}, which is known to
be NP-complete~\cite{book:GarJoh79}.
\textsc{X3C} is defined as follows.

\bigskip

\noindent
\fbox{
\parbox{\boxwidth}{
\smallskip
\textsc{Exact Cover by 3-Sets (X3C)}:

\medskip

Given a set $U = \{u_1, u_2, \dots, u_n\}$ and
a collection $\SSS = \{S_1, S_2, \dots, S_m\}$ of subsets of~$U$ where
$|S_i| = 3$ for every $S_i \in \SSS$ and every $u_i \in U$ belongs to at most
three subsets in~$\SSS$,
does $(U,\SSS)$ admit an exact cover, i.e., is there an $\SSS' \subseteq \SSS$
such that $|\SSS'| = n/3$ and $\bigcup_{S_i \in \SSS'} S_i = U$?
\smallskip
}
}

\medskip

We assume w.l.o.g. that in any given instance of \textsc{X3C}, $n/3$~is
an integer and each $u_i \in U$ belongs to at least one subset in~$\SSS$.

Given an instance $(U,\SSS)$ of \textsc{X3C}, construct
two node-labeled, unordered trees~$T$ and~$P$ as described next.
(Refer to Fig.~\ref{fig:unique_leaves_example} for an example of
the reduction.)
Let $W = \{s_i^j \,:\, 1 \leq i \leq m,\, 0 \leq j \leq n/3\}$ be a set of
elements different from~$U$ (i.e., $U \cap W = \emptyset$),
define $L = U \cup W$, and let $\alpha$ be
an element not in~$L$.
For any $L' \subseteq L$, let $t(L')$ denote the height-$1$ unordered tree
consisting of a root node labeled by~$\alpha$ whose children are bijectively
labeled by~$L'$.
Construct $T$ by creating a node~$r$ labeled by~$\alpha$ and attaching
the roots of the following trees as children of~$r$:
\begin{itemize}
\item[(i)] $t(\{s_i^0\} \cup S_i)$ for each $i \in \{1,2,\dots,m\}$
\item[(ii)] $t(\{s_i^{j-1},s_i^{j}\})$ for each $i \in \{1,2,\dots,m\}$,
$j \in \{1,2,\dots,n/3\}$
\item[(iii)] $t(\{s_1^j,s_2^j,\dots,s_m^j\})$ for each
$j \in \{1,2,\dots,n/3\}$
\end{itemize}
Construct $P$ by taking a copy of~$t(U)$ and then, for each $w \in W$,
attaching the root of~$t(\{w\})$ as a child of the root of~$P$.
Note that by construction, $L(T) = L(P) = L$, $h(T) = 2$, $h(P) = 2$,
$occ(T) = 3$, and $occ(P) = 1$ hold.

\begin{figure}[t]
\centering
\includegraphics[scale=0.46]{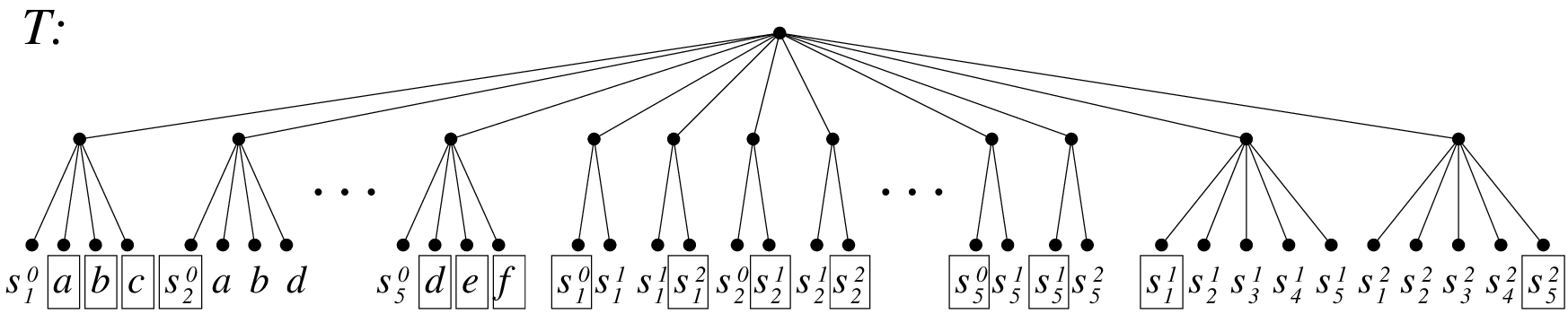}

\bigskip

\includegraphics[scale=0.46]{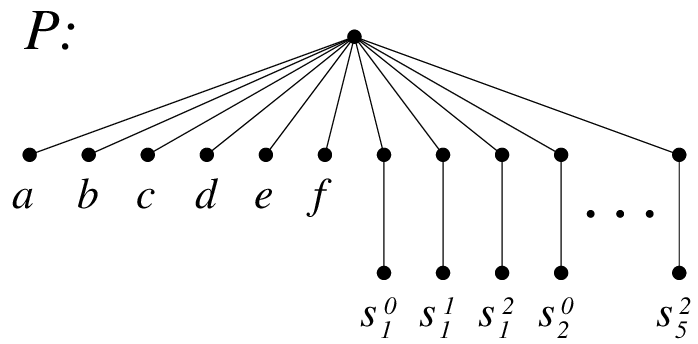}
\caption{Illustrating the proof of
Theorem~\ref{thm:unique_leaves_NP-complete}.
Suppose that $U = \{a,b,c,d,e,f\}$ and
$\SSS = \{\{a,b,c\}, \{a,b,d\}, \{b,c,e\}, \{c,d,e\},\{d,e,f\}\}$ with $|\SSS| = 5$
is a given instance of \textsc{X3C}.
Applying the reduction yields the shown trees~$T$ and~$P$.
Here, $P$~is included in~$T$ because all the leaves of~$P$ can be mapped to
leaves in~$T$ as indicated by the boxes, which gives
the exact cover $\{\{a,b,c\}, \{d,e,f\}\}$ for~$(U,\SSS)$.}
\label{fig:unique_leaves_example}
\end{figure}

\medskip

We will now show that $P$ is included in~$T$ if and only if $(U,\SSS)$ admits
an exact cover.

\medskip

\noindent
$(\leftarrow)$
First, suppose that $(U,\SSS)$ admits an exact cover
$\{S_{\sigma_1}, S_{\sigma_2}, \dots, S_{\sigma_{n/3}}\} \subseteq \SSS$.
Then $P$ is included in~$T$ because:
\begin{itemize}
\item[{\raise0.9pt\hbox{$\bullet$}}]
For each $S_i \in \SSS$ in the exact cover, the three leaves in~$P$ that
are labeled by~$S_i$ can be mapped to the $t(\{s_i^0\} \cup S_i)$-subtree
in~$T$.

\item[{\raise0.9pt\hbox{$\bullet$}}]
For each $S_i \in \SSS$ in the exact cover, the leaf in~$P$ labeled by~$s_i^j$
can be mapped
to the $t(\{s_i^{j},s_i^{j+1}\})$-subtree in~$T$ for $j \in \{0,1,\dots,k-1\}$,
to the $t(\{s_1^j,s_2^j,\dots,s_m^j\})$-subtree for $j = k$, and
to the $t(\{s_i^{j-1},s_i^{j}\})$-subtree for $j \in \{k+1,k+2,\dots,n/3\}$,
where $k$ is defined by $S_i = S_{\sigma_k}$.

\item[{\raise0.9pt\hbox{$\bullet$}}]
For each $S_i \in \SSS$ that is not in the exact cover, the leaf in~$P$ labeled
by~$s_i^0$ can be mapped to the $t(\{s_i^0\} \cup S_i)$-subtree in~$T$.

\item[{\raise0.9pt\hbox{$\bullet$}}]
For each $S_i \in \SSS$ that is not in the exact cover, the leaf in~$P$ labeled
by~$s_i^j$ can be mapped to the $t(\{s_i^{j-1},s_i^{j}\})$-subtree in~$T$ for
$j \in \{1,2,\dots,n/3\}$.
\end{itemize}

\medskip

\noindent
$(\rightarrow)$
Next, suppose that $P$ is included in~$T$.
By the definitions of~$T$ and~$P$, each subtree rooted at a child of
the root of~$T$ can
have at most one leaf with a label in~$W$ or at most three leaves with labels
in~$U$ mapped to it from~$P$.
Since $|W| = m \cdot (n/3 + 1)$ but there are only
$(m+1) \cdot n/3$ subtrees in~$T$ of the form $t(\{s_i^{j-1},s_i^{j}\})$
and $t(\{s_1^j,s_2^j,\dots,s_m^j\})$, at least $m - n/3$ subtrees of the form
$t(\{s_i^0\} \cup S_i)$ must have a leaf with a label from
$\{s_i^0 \,:\, 1 \leq i \leq m\}$ mapped to them.
This means that at most $n/3$ subtrees of the form $t(\{s_i^0\} \cup S_i)$
remain for the $n$~leaves in~$P$ labeled by~$U$ to be mapped to,
and hence, exactly $n/3$ such subtrees have to be used.
Denote these $n/3$ subtrees by
$t(\{s_{\sigma_1}^0\} \cup S_{\sigma_1})$,
$t(\{s_{\sigma_2}^0\} \cup S_{\sigma_2})$,
$\dots$,
$t(\{s_{\sigma_{n/3}}^0\} \cup S_{\sigma_{n/3}})$.
Then $\{S_{\sigma_1}, S_{\sigma_2}, \dots, S_{\sigma_{n/3}}\}$ is
an exact cover of~$(U,\SSS)$.
\end{proof}

\section{A polynomial-time algorithm for the case of $occ(P,T)=2$}
\label{sec:poly}

This section and the following ones consider the decision version of
unordered tree inclusion.
By repeatedly applying each procedure $O(n)$ times, we can solve
the locating problem version and thus the theorems hold as they are.

In this section, we require that each leaf of $P$ has a unique label
and that it appears at no more than $k$ leaves in $T$.
We denote this number $k$ by $occ(P,T)$ (see Fig.~\ref{fig:d2d3}).
Note that the case of $occ(P)=1$ and $occ(T)=k$ is included in
the case of $occ(P,T)=k$.
From the unique leaf label assumption, we have the following observation.

\begin{figure}[t!]
\centering
\includegraphics[width=12cm]{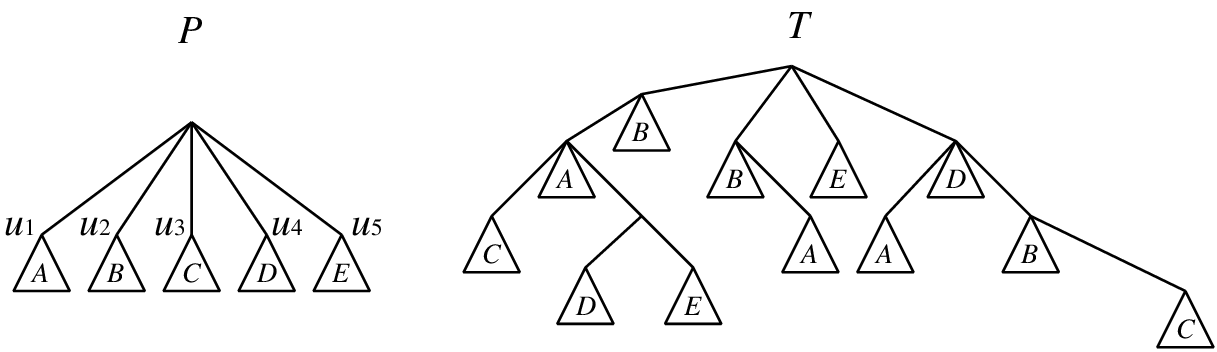}
\caption{For these trees, $Occ(u_1,M)=Occ(u_2,M)=3$,
$Occ(u_3,M)=Occ(u_4,M)=Occ(u_5,M)=2$,
$d_2=3$, $d_3=2$, and $occ(P,T)=3$,}
\label{fig:d2d3}
\end{figure}

\begin{proposition}
Suppose that $P(u)$ has a leaf labeled with $b$.
If $P(u) \subset T(v)$, then $v$ is an ancestor of a leaf (or leaf itself)
with label $b$.
\end{proposition}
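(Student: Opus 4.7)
The plan is to apply the mapping-based characterization of tree inclusion given in Section~\ref{preliminaries}. Since $P(u) \subset T(v)$ holds with $u$ corresponding to $v$, there is a mapping $M \subseteq V(P(u)) \times V(T(v))$ with $(u,v) \in M$ such that (i) every $(u',v') \in M$ satisfies $\ell(u') = \ell(v')$ and (ii) the ancestor-descendant relation is preserved in both directions. First, I would pick the leaf $x$ of $P(u)$ carrying label $A$; this leaf is unique because the leaves of $P$ are distinctly labeled under the standing assumption of Section~\ref{sec:poly}. Let $y$ be the image of $x$ under $M$; by label preservation, $\ell(y) = A$.

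Next, I would argue that $y$ must be a leaf of $T$. The standing convention of Section~\ref{sec:poly} is that the labels involved in the leaf-occurrence count $OCC(P,T)$ appear only on leaves of $T$, so a label that is used on a leaf of $P$ cannot appear on any internal node of $T$. Finally, ancestor preservation applied to the pairs $(u,v)$ and $(x,y)$ in $M$ gives that either $v = y$ (when $u = x$) or $v \in Anc(y)$, which is exactly the claimed conclusion: $v$ is either a leaf labeled $A$ or an ancestor of such a leaf.

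The main point worth spelling out — and the only place the section's special assumption is used — is the argument that $y$ is a leaf of $T$. Without it, $y$ could a priori be an internal node carrying label $A$, and the conclusion about $v$ reaching a leaf labeled $A$ would require further work. Given the convention, though, the proposition is an immediate consequence of the defining properties of an inclusion mapping, so I do not anticipate any real obstacle in the write-up.
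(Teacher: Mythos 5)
The paper states this proposition as an unproved ``observation,'' so there is no official proof to compare against; your argument via the inclusion mapping (the leaf $x$ of $P(u)$ labelled $A$ maps to some $y \in V(T(v))$ with $\ell(y)=A$, and ancestor-preservation applied to the pairs $(u,v)$ and $(x,y)$ forces $v=y$ or $v \in Anc(y)$) is the natural and correct way to fill it in. The one point to be careful about is the step where you declare $y$ to be a leaf of $T$: the ``standing convention'' you invoke is not actually written in Section~\ref{sec:poly}, which only requires each leaf label of $P$ to occur at no more than $L$ \emph{leaves} of $T$ and says nothing that prevents an internal node of $T$ from also carrying such a label (in which case $v$ would only be guaranteed to be an ancestor of a \emph{node} labelled $A$, not of a leaf). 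Some assumption of this kind is genuinely needed both for the proposition as literally stated and for the paper's subsequent claim that each $u_i$ has at most $k$ minimal candidates $v_j$, so you are right to isolate it, but you should present it as an implicit hypothesis being made explicit rather than as a convention the paper has already established.
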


We say that $v_j$ is a \emph{minimal node for $u_i$} if $P(u_i) \prec T(v_j)$
holds.
It follows from the proposition above that
the number of minimal nodes is at most $k$ for each $u_i$ if $occ(P,T)=k$.

The preliminary version of this paper \cite{akutsu2018unordered} showed
that the case $k=2$ can be solved in polynomial time by using a reduction
to 2-SAT.
Here, we give a more direct solution that effectively utilizes some techniques
from a classic polynomial-time algorithm for 2-SAT~\cite{aspvall1979}.
This algorithm will be extended for the case of $k=3$ in the next subsection.

From Proposition~\ref{prop:bottom-up},
it is enough to consider the decision of whether $P(u) \subset T(v)$ with
$u$ corresponding to $v$.
Let $Chd(u) = \{u_1,\ldots,u_d\}$.
We present a simple algorithm to decide whether or not $P(u) \subset T(v)$.
We can assume by induction that
$P(u_i) \prec T(v_j)$ is known
for all $u_i \in Chd(u)$
and for all
$v_j \in V(T(v)) - \{ v \}$.
Let $M = \{ (u_i,v_j) \,|\, P(u_i) \prec T(v_j) ~\land~ v_j \in V(T(v)) \}$.
We define $OCC(u_i,M)$ and $Occ(u_i,M)$ by
\begin{eqnarray*}
OCC(u_i,M) & = & \{(u_i,v_j) \,|\, (u_i,v_j) \in M\}.\\
Occ(u_i,M) & = & |OCC(u_i,M)|.
\end{eqnarray*}
See Fig.~\ref{fig:d2d3} for an illustration.
A node $u_i$ with $Occ(u_i,M)=h$ is called a node of \emph{rank} $h$.
Note that $u_i$, $v_j$, and $M$ appearing above depend on $(u,v)$.

The crucial task is to find an injective mapping $\psi$
(called a \emph{valid mapping})
from $P(u)$ to $V(T(v))-\{v\}$ such that
$P(u_i) \prec T(\psi(u_i))$ holds for all $u_i$ ($i=1,\ldots,d$) and
there is no ancestor/descendant relationship between
any $\psi(u_i)$ and $\psi(u_j)$ ($u_i \neq u_j$).
If this task can be performed in $O(f(d,m,n))$ time,
from Proposition~\ref{prop:bottom-up},
the total time complexity
will be $O^{\ast}(f(d,m,n))$.
We assume w.l.o.g. that $\psi$ is given as a set of mapping pairs.

Hereafter, we let $Chd(u)=\{u_{i_1},\ldots,u_{i_d}\}$.
Since we consider the case of $occ(P,T)$ $=2$,
we assume w.l.o.g. that all $u_{i_k}$s have rank 2
(i.e., $Occ(u_{i_k},M)=2$ for $k=1,\ldots,d$).
Accordingly, 
we let $OCC(u_{i_k},M)=\{(u_{i_k},v_{j_{k,0}}),(u_{i_k},v_{j_{k,1}})\}$
for $k=1,\ldots,$ $d$.
As in \cite{aspvall1979},
we construct a directed graph $G_2(V_2,E_2)$ by
\begin{eqnarray*}
V_2 & = & \{u_{i_{k,0}},u_{i_{k,1}} \mid u_{i_k} \in Chd(u)\},\\
E_2 & = & \{(u_{i_{k,p}},u_{i_{h,q}}) \mid
v_{j_{k,p}} \in AncDes(v_{j_{h,1-q}}),~h \neq k \},
\end{eqnarray*}
where $u_{i_{k,p}}$s are newly introduced symbols.
See also Fig.~\ref{fig:occ-2}.
Intuitively, an arc $(u_{i_{k,p}},u_{i_{h,q}})$ implies that if
$(u_{i_k},v_{j_{k,p}})$ is in the inclusion mapping then it is possible for
$(u_{i_h},v_{j_{h,q}})$, but not $(u_{i_h},v_{j_{h,1-q}})$, to be in the
mapping, too.

\begin{figure}[t!]
\centering
\includegraphics[width=12cm]{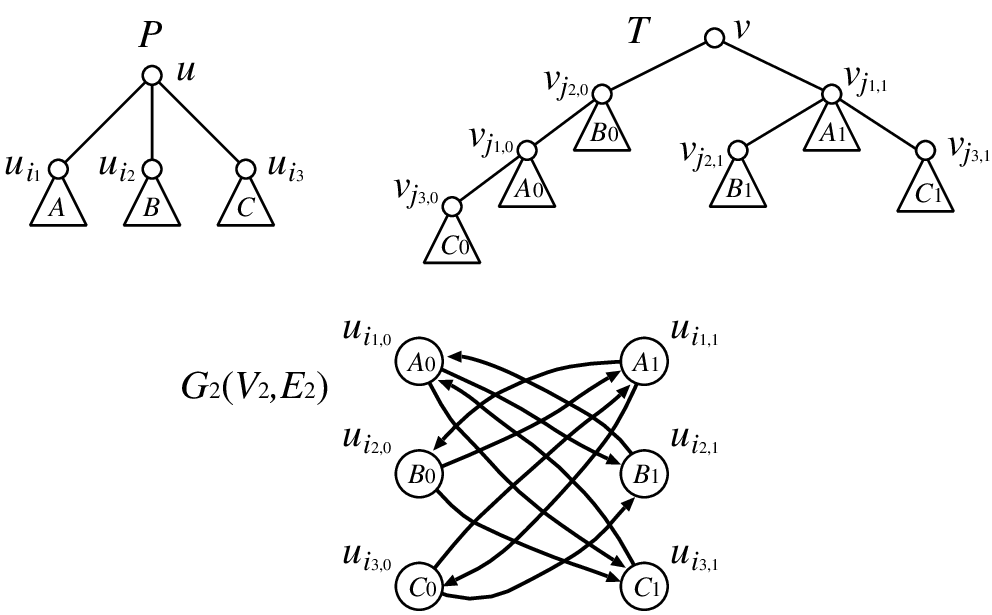}
\caption{Example of
$G_2(V_2,E_2)$
constructed from $P(u)$ and $T(v)$ in the case of $occ(P,T)=2$.
Each vertex $u_{i_{k,p}}$ is represented by the corresponding subtree in $T$,
where triangles labeled by the same capital letter represent
isomorphic subtrees.
$G_2(V_2,E_2)$ has two strongly connected components
$\{u_{i_{1,0}},u_{i_{2,1}},u_{i_{3,1}}\}$ and
$\{u_{i_{1,1}},u_{i_{2,0}},u_{i_{3,0}}\}$,
whereas there is only one consistent assignment
$\{u_{i_{1,0}}=1,u_{i_{2,1}}=1,u_{i_{3,1}}=1,
u_{i_{1,1}}=0,u_{i_{2,0}}=0,u_{i_{3,0}}=0\}$,
which corresponds to a mapping
$\psi=\{(u_{i_1},v_{j_{1,0}}),(u_{i_2},v_{j_{2,1}}),(u_{i_3},v_{j_{3,1}})\}$.}
\label{fig:occ-2}
\end{figure}

\begin{proposition}
There exists a path (resp., an edge) from 
$u_{i_{k,p}}$ to $u_{i_{h,q}}$
if and only if
there exists a path (resp., an edge) from 
$u_{i_{h,1-q}}$ to $u_{i_{k,1-p}}$
\end{proposition}
\begin{proof}
It is shown in \cite{aspvall1979} that $G_2(V_2,E_2)$ has a duality property:
$G_2$ is isomorphic to the graph obtained from $G_2$ by reversing the direction
of all the edges and complementing the names of all vertices.
Since $u_{i_{h,q}}$ and $u_{i_{h,1-q}}$
(resp., $u_{i_{k,p}}$ and $u_{i_{k,1-p}}$) correspond to complementary
variables,
the proposition holds.
\end{proof}

Consider a 0-1 assignment to $V_2$, 
where 0 and 1 correspond to {\bf false} and {\bf true}, respectively.
An assignment is called \emph{consistent} if
the following conditions are satisfied.
\begin{itemize}
\item $u_{i_{k,0}} + u_{i_{k,1}} = 1$ holds for all $k=1,\ldots,d$,
\item if $u_{i_{k,p}}=1$,
all vertices reachable from $u_{i_{k,p}}$ have value 1.
\end{itemize}
Note that the first condition implies that
$u_{i_{k,1-p}}$ corresponds to the negation of $u_{i_{k,p}}$,
which further means that
$u_{i_k}$ must be mapped to exactly one of $v_{j_{k,0}}$ and
$v_{j_{k,1}}$.
Note also that the second condition implies that if $u_{i_{k,p}}=0$,
all vertices reachable to $u_{i_{k,p}}$ have value 0.

\begin{proposition}
$P(u) \subset T(v)$ holds if and only if
there exists a consistent assignment.
Furthermore, $\psi$ can be obtained from the vertices
to which 1 is assigned.
\end{proposition}
\begin{proof}
Suppose that there exists
a consistent assignment.
Then, we can construct an inclusion mapping $\psi$ 
for $Chd(u)$ by letting
$\psi(u_{i_k})=v_{j_{k.p}}$ for $p$ such that $u_{i_{k,p}}=1$,
for all $u_{i_k} \in Chd(u)$, where
the validity
follows from the above two conditions 
and the meaning of an arc.

Conversely, suppose that there exists an inclusion mapping $\psi$.
Then, we let $u_{i_{k,p}}=1$ if and only if
$\psi(u_{i_k})=v_{j_{k.p}}$ for all $u_{i_k} \in Chd(u)$,
which clearly satisfies the above two conditions.
\end{proof}

As in \cite{aspvall1979},
we have the following proposition.

\begin{proposition}
There exists a consistent assignment to $V_2$ if and only if
there is no $k$ such that
$u_{i_{k,0}}$ and $u_{i_{k,1}}$ belong to the same strongly
connected component in $G_2(V_2,E_2)$.
\end{proposition}

The strongly
connected components can be computed in linear time \cite{tarjan1972}.
Furthermore, a consistent assignment can be obtained by 
greedily assigning 1 to vertices from deeper to shallower SCCs
under the DFS (depth first search) ordering as in \cite{aspvall1979}.
Since this procedure can clearly be done in polynomial time,
the following theorem holds.

\begin{theorem}
Unordered tree inclusion can be solved in polynomial time
if $occ(P,$ $T)=2$.
\label{thm:poly}
\end{theorem}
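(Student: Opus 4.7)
The plan is to combine the bottom-up dynamic programming of Proposition~\ref{prop:bottom-up} with a 2-SAT reduction for the core decision at each pair $(u,v)$. By that proposition, it suffices to decide $P(u) \prec T(v)$ in polynomial time for a single pair, assuming $P(u_j) \prec T(v_i)$ is already known for every strict sub-instance. So I fix $u$ with $\ell(u)=\ell(v)$ and $Chd(u)=\{u_1,\ldots,u_d\}$; by Proposition~\ref{prop:central} the question then reduces to the existence of pairwise ancestor-incomparable, distinct $v_1,\ldots,v_d \in Des(v)$ with $P(u_i) \prec T(v_i)$ for every~$i$.

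The first step is to compute, for each $u_i$, the set $C_i = \{v_j \in Des(v) : P(u_i) \prec T(v_j)\}$. As the section already notes, $P(u_i)$ contains at least one leaf whose (unique-in-$P$) label appears at most twice in $T$, so any $v_j\in C_i$ is an ancestor of one of at most two specific leaves of $T$, and the minimality built into $\prec$ allows at most one such $v_j$ above each; hence $|C_i|\le 2$. If any $C_i=\emptyset$, the answer at $(u,v)$ is ``no''.

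Next, I introduce a Boolean variable $x_i$ for each $u_i$: writing $C_i=\{a_i,b_i\}$, let $x_i$ true mean ``choose $a_i$'' and $x_i$ false mean ``choose $b_i$''; when $|C_i|=1$, I add a unit clause pinning $x_i$. For every pair of indices $i\ne j$ and every $(c_i,c_j)\in C_i\times C_j$ such that $c_i=c_j$, or $c_i\in Des(c_j)$, or $c_j\in Des(c_i)$, I add the two-literal clause forbidding this joint choice. The resulting formula has $O(d)$ variables and $O(d^2)$ clauses and is therefore satisfiable in linear time by Aspvall--Plass--Tarjan. I declare $P(u)\prec T(v)$ to hold iff this formula is satisfiable; plugging this decision routine into Proposition~\ref{prop:bottom-up} gives a polynomial-time overall algorithm.

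The main obstacle is the correctness of the 2-SAT encoding. Soundness is immediate: any satisfying assignment yields an ancestor-incomparable, pairwise distinct family $(v_1,\ldots,v_d)$ of minimal witnesses and thus a bijection $\phi$ as required by Proposition~\ref{prop:central}. Completeness is the delicate part, since a priori a witnessing bijection could send some $u_i$ to a non-minimal descendant outside $C_i$; however, Proposition~\ref{prop:central}(3) asserts that any such bijection can be pushed downward so that $\phi(u_i)\in C_i$ for every~$i$, so restricting to candidates in $C_i$ loses no solutions. A small technical point is that Proposition~\ref{prop:central} demands the chosen $v_i$ be pairwise distinct and not merely ancestor-incomparable; this is enforced by the $c_i=c_j$ case of the conflict clauses, which still fits into 2-CNF precisely because $|C_i|\le 2$ for every~$i$.
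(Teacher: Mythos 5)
Your proposal is correct and follows essentially the same route as the paper: bound the number of minimal witnesses of each child $u_i$ by two via a uniquely-labeled leaf of $P(u_i)$, and encode the choice of pairwise-incomparable witnesses as a 2-SAT instance with conflict clauses for equal or ancestor--descendant candidate pairs. The only cosmetic differences are that you use one Boolean variable per child instead of one per pair $(u_i,v_j)$ (so the paper's ``exactly one'' clauses become your truth-value encoding) and that you make the embedding into the bottom-up dynamic programming of Proposition~\ref{prop:bottom-up} explicit.
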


\section{An $O^{\ast}(1.619^d)$-time algorithm for the case of $occ(P,T)=3$}
\label{sec:occ3}

In this section, we present an $O^{\ast}(1.619^d)$-time algorithm
for the case of $occ(P,T)=3$, where $d$ is the maximum degree of $P$,
$m=|V(P)|$, and $n=|V(T)|$.
Note that this case remains NP-hard from Theorem~\ref{thm:unique_leaves_NP-complete}.

The basic strategy is to
combine bottom-up dynamic programming and detection of a consistent assignment
as in Section~\ref{sec:poly} to determine whether $P(u) \subset T(v)$ holds,
where a recursive procedure is employed here for finding a consistent assignment.
Let $Chd(u)=\{u_{i_1},\ldots,u_{i_d}\}$.
As in Section~\ref{sec:poly},
we can assume that $P(u_{i_k}) \prec T(v_{j_h})$ is known for all $u_{i_k}$
and for all $v_{j_h} \in V(T(v)) - \{ v \}$,
and we let
$M = \{ (u_{i_k},v_{j_h}) \,|\, P(u_{i_k}) \prec T(v_{j_h}) ~\land~ v_{j_h} \in V(T(v)) \}$.

Let $d_3$ (resp., $d_2$) be the number of $u_{i_k}$s of rank 3
(resp., rank 2) (see also Fig.~\ref{fig:d2d3}).
We assume w.l.o.g. that $d_2 + d_3 = d$
because $Occ(u_{i_k},M)=1$ means that $\psi(u_{i_k})$ is uniquely determined
and thus we can ignore $u_{i_k}$s with $Occ(u_{i_k},M)=1$.

\begin{figure}[t!]
\centering
\includegraphics[width=12cm]{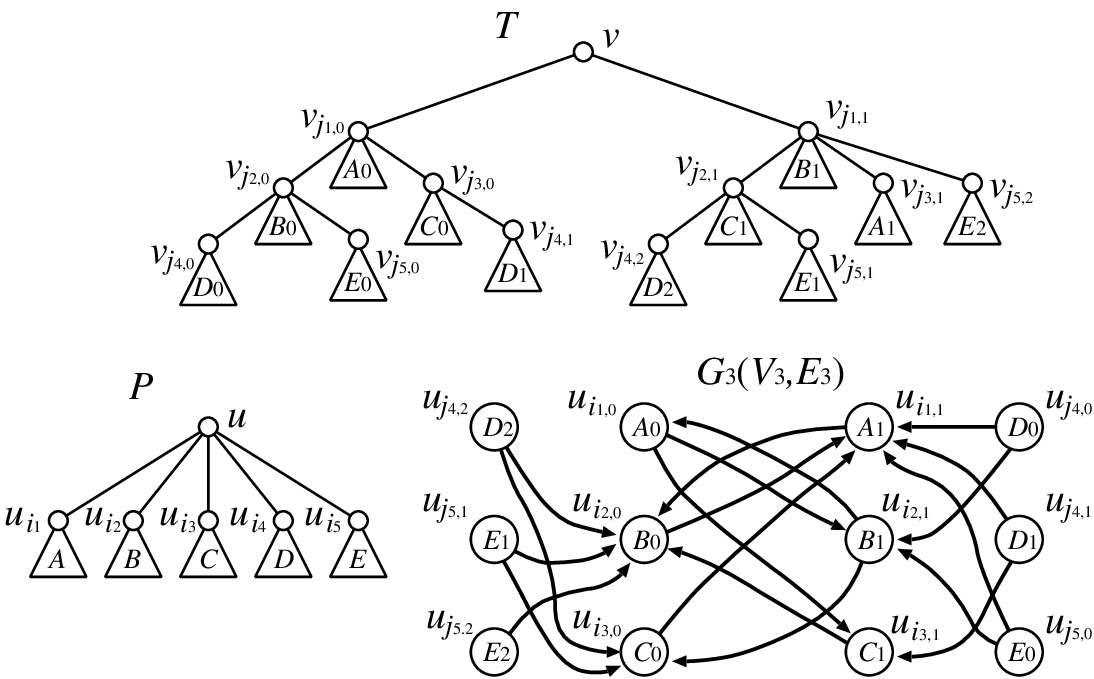}
\caption{Example of
$G_3(V_3,E_3)$
constructed from $P(u)$ and $T(v)$ in the case of $occ(P,T)=3$.
$u_{j_{4,0}}$ and $u_{j_{5,0}}$ are inadmissible vertices,
and $(u_{j_{4,1}},u_{j_{5,1}})$ is an inadmissible pair,}
\label{fig:occ-3}
\end{figure}

We construct $G_2(V_2,E_2)$ as in Section~\ref{sec:poly},
using only $u_{i_k}$s with rank 2
and the corresponding $v_{j_h}$s, considering
ancestor-descendant relations only among them.
Then,
for each $u_{i_k} \in Chd(u)$ such that 
$OCC(u_{i_k},M) = \{(u_{i_k},v_{j_{k,0}}),(u_{i_k},v_{j_{k,1}}),$
$(u_{i_k},v_{j_{k,2}})\}$,
we let $V_{OCC3}(u_{i_k}) = \{u_{i_{k,0}},u_{i_{k,1}},u_{i_{k,2}} \}$,
where $u_{i_{k,p}}$s are newly introduced symbols.
Let $V_{OCC3} = \bigcup_{Occ(u_{i_k},M)=3} V_{OCC3}(u_{i_k})$.
Then, we construct $G_3(V_3,E_3)$ from $G_2(V_2,E_2)$ by
\begin{eqnarray*}
V_3 & = & V_2 \cup V_{OCC3},\\
E_3 & = & E_2 \cup \{ (u_{i_{k,p}},u_{i_{h,q}}) \mid 
u_{i_{k,p}} \in V_{OCC3}, 
u_{i_{h,q}} \in V_2,
v_{i_{h,1-q}} \in AncDes(v_{i_{k,p}}) \}.
\end{eqnarray*}
See Fig.~\ref{fig:occ-3} for an example of $G_3(V_3,E_3)$.

\begin{definition}
\label{def:inad}
We say that $u_{i_{k,p}} \in V_{OCC3}$ is an \emph{inadmissible vertex}
if there exist paths from $u_{i_{k,p}}$ to $u_{i_{l,0}}$ and $u_{i_{l,1}}$
in $G_3(V_3,E_3)$ for some $u_{i_l} \in Chd(u)$ of rank 2.
We also say that $(u_{i_{k,p}},u_{i_{h,q}}) \in V_{OCC3} \times V_{OCC3}$ 
($k \neq h$)
is an \emph{inadmissible pair}
if $v_{i_{h,q}} \in AncDes(v_{i_{k,p}})$ holds, or
there exist a path reachable from $u_{i_{k,p}}$ to $u_{i_{l,0}}$ in $G_3(V_3,E_3)$
and a path reachable from $u_{i_{h,q}}$ to $u_{i_{l,1}}$ in $G_3(V_3,E_3)$
for some $u_{i_l} \in Chd(u)$ of rank 2.
\end{definition}

It is to be noted that
an inadmissible vertex or
an inadmissible pair $(u_{i_{k,p}},u_{i_{h,q}})$ 
cannot appear in any injective mapping $\psi$
for $P(u) \subset T(v)$ because the use of an inadmissible vertex or
an inadmissible pair would make a consistent assignment impossible.
Accordingly, we can assume w.l.o.g. that there does not exist 
an inadmissible vertex $u_{i_{k,p}}$ in $V_{OCC3}$.

\begin{proposition}
Suppose that there exists a consistent assignment on vertices in $G_2(V_2,E_2)$
in the sense defined in Section~\ref{sec:poly}.
If there does not exist an inadmissible pair,
there exists a valid mapping $\psi$.
Furthermore, such a mapping can be found in polynomial time.
\label{prop:admissible}
\end{proposition}
\begin{proof}
We present a greedy algorithm for finding a consistent assignment,
from which a valid mapping can be obtained.
Beginning with an empty assignment on all vertices in $V_2$,
we repeat the following procedure in any order:
for each $u_{i_{k}}$ of rank 3,
assign 1 to $u_{i_{k,0}}$,
assign 0 to $u_{i_{k,1}}$ and $u_{i_{k,2}}$,
and assign 1 to all vertices in $G_3(V_3,E_3)$ reachable from $u_{i_{k,0}}$.
Finally, we extend the resulting assignment to a consistent assignment
by assigning 1 to remaining vertices from deeper to shallower strongly
connected components under the DFS ordering.
Clearly, this algorithm works in polynomial time.
It is also seen from the definition of the inadmissible pair that
this algorithm always finds a consistent assignment. 
\end{proof}

We denote the procedure in the proof of Proposition~\ref{prop:admissible}
by $FindMappingAD(M)$.
This procedure returns {\bf true} or {\bf false}.
{\bf true} corresponds to the case where a consistent assignment and
a valid mapping $\psi$ exist.
It is straightforward to modify the procedure so that it outputs
$\psi$ when it exists.

In order to handle inadmissible pairs,
we employ a simple recursive procedure.
Suppose that $(u_{i_{k,p}},u_{i_{h,q}})$ is an inadmissible pair.
If we include $(u_{i_k},v_{j_{k,p}})$ in $\psi$,
we cannot include $(u_{i_h},v_{j_{h,q}})$ in $\psi$.
In this case, $d_3$ is decreased by 2.
If we do not include $(u_{i_k},v_{j_{k,p}})$,
we can delete this pair from $M$, which decreases $d_3$ by 1.
Based on this idea, we obtain the following main procedure for the case of
$occ(P,T)=3$.
Note that if we include $(u_{i_k},v_{j_{k,p}})$ in $\psi$,
all pairs 
$(u_{i_k},v_{j_{k,r}})$ with $r=0,1,2$ are removed from $M$.
Furthermore,
all pairs $(u_{i_h},v_{j_{h,q}})$ such that
$v_{j_{h,q}} \in AncDes(v_{j_{k,p}})$ are removed from $M$,
which may cause further removal.
$Update(M,(u_{i_k},v_{j_{k,p}}))$ executes this updating procedure
while
making the corresponding 0-1 assignments on $G_3(V_3,E_3)$.

\begin{algorithm}
\caption{$FindMapping(M)$}
\label{alg:findmap}
\begin{algorithmic}
\IF{there exists an inadmissible pair $((u_{i_k},v_{j_{k,p}}),(u_{i_h},v_{j_{h,q}}))$}
\STATE{$M_1 := Update(M,(u_{i_k},v_{j_{k,p}}))$}
\STATE{$M_2 := M - \{(u_{i_k},v_{j_{k,p}}) \}$}
\IF{$FindMapping(M_1) = \TRUE$}
\RETURN{\TRUE}
\ELSE
\RETURN{$FindMapping(M_2)$}
\ENDIF
\ENDIF
\RETURN{$FindMappingAD(M)$}
\end{algorithmic}
\end{algorithm}

\begin{theorem}
Unordered tree inclusion can be solved in $O^{\ast}(1.619^d)$ time
if $occ(P,$ $T)=3$.
\label{thm:occ-3}
\end{theorem}
\begin{proof}
It follows from the discussions above that 
$FindMapping(M)$ correctly decides whether $P(u) \subset T(v)$
(when $u$ and $v$ have the same label).
Therefore, we analyze the exponential factor (depending on $d$)
of the time complexity of $FindMapping(M)$.

Let $f(k)$
denote
the number of times that
$FindMapping(M)$ is called when $k = |\{ u_i \,|\, Occ(u_i,M) = 3 \}|$.
Clearly, if $k \leq 1$, $f(k) \leq 1$.
Otherwise (i.e., $k \geq 2$), it may invoke two recursive calls:
one with at most $k-2$ nodes of rank 3 and the other with at most $k-1$ nodes
of rank 3.
Therefore, we have
\begin{eqnarray*}
f(k) & \leq &  f(k-1) + f(k-2),
\end{eqnarray*}
from which $f(k) = O(1.619^k)$ follows (c.f., Fibonacci number).

Since $d_3 \leq d$ holds and
both $FindMappingAD(M)$ and $Update(M,(u_{i_k},u_{j_{k,p}}))$
work in polynomial time per execution,
the total time complexity
is $O^{\ast}(1.619^{d})$.
\end{proof}

\section{A randomized algorithm for the case of $h(P)=1$ and $h(T)=2$}
\label{low-height}

Finally, we consider the case of $h(P)=1$ and $h(T)=2$,
denoted by {\bf IncH2}.
This problem variant is NP-hard according to
Corollary~\ref{corollary:KM_hardness}.
We assume w.l.o.g. that the roots of $P$ and $T$ have the same
unique label and thus they must match in any inclusion mapping.

Let $U=\{u_1,\ldots,u_d\}$ be the set of children of $r(P)$.
Let $v_1,\ldots,v_g$ be the children of $r(T)$, and
let $v_{i,1},\ldots,v_{i,n_i}$ be the children of each $v_i$.

First, we assume that $\ell(u_i) \neq \ell(u_j)$ holds for
all $i \neq j$,
where $\ell(v)$ denotes the label of $v$.
This special case is denoted by {\bf IncH2U}.
Recall that {\bf IncH2U} remains NP-hard
from the condition of $occ(P)=1$ of Corollary~\ref{corollary:KM_hardness}.

{\bf IncH2U} can be solved by a reduction to CNF SAT,
different from the one mentioned in Section~\ref{sec:poly}.
(In fact, it can be considered as an inverse reduction of the one originally
used to prove the NP-hardness of unordered tree inclusion
by Kilpel\"{a}inen and Mannila~\cite{kilpelainen1995ordered}.)
For each $u_i$, we define $X^{POS}_i$ and $X^{NEG}_i$ by
\begin{eqnarray*}
X^{POS}_i & = & \{ x_j \,|\, \ell(u_i)=\ell(v_j) \},\\
X^{NEG}_i & = & \{ x_j \,|\, (\exists v_{j,k} \in  Chd(v_j))(\ell(u_i)=\ell(v_{j,k})) \}.
\end{eqnarray*}
For each $u_i$, we construct a clause $C_i$ by
$$
C_i = \left( \bigvee_{x_j \in X^{POS}_i} x_j \right)
\lor \left( \bigvee_{x_j \in X^{NEG}_i} \lnon{x_j} \right).
$$
Then, the resulting SAT instance is $\{C_1,\ldots,C_d\}$.
Intuitively, $x_j=1$ corresponds to the case where $u_i$ is mapped to $v_j$,
where $\ell(u_i)=\ell(v_j)$.
Of course, multiple $v_j$s may correspond to $u_i$.
However, it is enough to consider an arbitrary one.

\begin{proposition}
{\bf IncH2U} can be solved in $O^{\ast}(1.234^d)$ time.
\end{proposition}
\begin{proof}
First we prove the correctness of the reduction, where
we assume w.l.o.g. that $r(P)$ is mapped to $r(T)$.
Suppose that there exists an inclusion mapping $\phi$ from $V(P)$ to
$V(T)$.
Then, we let $x_j=1$ if $\phi(u_i)=v_j$, and $x_j=0$ if $\phi(u_i)=v_{j,k}$.
An arbitrary assignment can be done on each of the other variables.
Then, we can see that
there is no inconsistency on the resulting assignment and
all $C_i$s are satisfied.
Conversely, suppose that there exists a satisfying assignment on $C_i$s.
We let $\phi(u_i)=v_j$ if $x_j=1$ and $\ell(u_i)=\ell(v_j)$.
Otherwise, we can let $\phi(u_i)=v_{j,k}$ for some $v_j$ such that $x_j=0$
and $\ell(u_i)=\ell(v_{j,k})$.
This $\phi$ gives an inclusion mapping.

Next we consider the time complexity.
In order to solve the satisfiability instance, we use
Yamamoto's $O^{\ast}(1.234^{d})$-time algorithm for SAT with
$d$~clauses~\cite{yamamoto2005sat}.
Since the other parts can be done in polynomial time,
we have the proposition.
\end{proof}

In order to solve {\bf IncH2},
we combine two algorithms:
(A1)~a random sampling-based algorithm;
and (A2)~a modified version of the $O(d 2^d mn^2)$-time algorithm
in Section~\ref{sec:improved}.

For (A1), we employ the \emph{color-coding} technique \cite{alon1995color}.
Let $d_0$ be the number of $u_i$s having unique labels, and
let $d_1 \leq d_2 \leq \cdots \leq d_h$ be the multiplicities of
the other labels in $U$.
Define $\alpha = 1 - \frac{d_0}{d}$.
Note that $d_0 + d_1 + \cdots + d_h = d$ and $d - d_0 = \alpha d$ hold.

For each label $a_i$ with $d_i \geq 2$ (i.e., $i>0$),
we relabel the nodes in $P$ having label $a_i$ by
$a^1_i,a^2_i,\ldots,a^{d_i}_i$ in an arbitrary order.
For each node $v$ in $T$ having label $a_i$,
we assign $a^j_i$ ($j=1,\ldots,d_i$) to $v$ uniformly at random, and then
apply the SAT-based algorithm for {\bf IncH2U}.
Let $M$ be the set of pairs in an inclusion mapping from $P$ to $T$.
If all nodes of $T$ appearing in $M$ have different labels,
a valid inclusion mapping can be obtained.
This success probability is given by
\begin{eqnarray*}
{\frac {d_1 !}{d_1^{d_1}}} \cdot 
{\frac {d_2 !}{d_2^{d_2}}} \cdots 
{\frac {d_h !}{d_h^{d_h}}} ~\geq~ {\frac {(\alpha d)!}{(\alpha d)^{(\alpha d)}}} .
\end{eqnarray*}
This inequality can be proved by repeatedly applying
$$
{\frac {d_1 !}{d_1^{d_1}}} \cdot 
{\frac {d_2 !}{d_2^{d_2}}} ~\geq~
{\frac {(d_1+d_2)!}{(d_1+d_2)^{d_1+d_2}}},
$$
which is seen from
$$
{\frac {(d_1+d_2)^{d_1+d_2}}{d_1^{d_1} d_2^{d_2}}}
~\geq~
\left(
\begin{array}{c}
d_1+d_2\\
d_1
\end{array}
\right) ~=~
{\frac {(d_1+d_2)!}{d_1! d_2!}}.
$$
Since ${\frac {k!}{k^k}} \geq e^{-k}$ holds for sufficiently large $k$,
the success probability is at least $e^{-\alpha d}$.
Therefore, if we repeat the random sampling procedure $e^{\alpha d}$ times,
the failure probability is at most
$(1-e^{-{\alpha d}})^{e^{\alpha d}} \leq e^{-1} < {\frac 1 2}$ 
because $\ln\left[ (1-{\frac 1 x})^x \right] = x\ln(1-{\frac 1 x})
\leq x (-{\frac 1 x}) = -1 = \ln(e^{-1})$ holds for any $x > 1$.

If we repeat the procedure $k (\log n) e^{\alpha d}$ times where $k$ is
any positive constant (i.e., the total time complexity is
$O^{\ast}(1.234^d \cdot e^{\alpha d})$),
the failure probability is at most ${\frac 1 {n^k}}$.

For (A2),
we modify the $O(d 2^d mn^2)$-time algorithm as follows.
Recall that if there exist labels with multiplicity more than one,
$S(v,v_i)$ is a multi-set.
In order to represent a multi-set, we memorize the multiplicity of each label.
Then, the number of distinct multi-sets is given by
\begin{eqnarray*}
N(d_0,\ldots,d_h) & = & 2^{d_0} \cdot  \prod_{l=1}^h (d_l+1) .
\end{eqnarray*}
Since $d_i+1 \leq 3^{\lceil d_i/2 \rceil}$ holds for any $d_i \geq 2$,
this number is bounded as follows:
\begin{eqnarray*}
N(d_0,\ldots,d_h) & \leq & 2^{d_0} \cdot 3^{\lceil (d-d_0)/2 \rceil}.
\end{eqnarray*}
Then, the time complexity of (A2) is $O^{\ast}(2^{(1-\alpha)d} \cdot 3^{(\alpha/2) d})$.

Since we can
select
the minimum of the time complexities of (A1) and (A2),
the resulting time complexity is given by
\begin{eqnarray*}
\max_{\alpha} \min(O^{\ast}(1.234^d \cdot e^{\alpha d}),O^{\ast}(2^{(1-\alpha)d} \cdot 3^{(\alpha/2) d})) .
\end{eqnarray*}
Since $1.234^d \cdot e^{\alpha d}$ and 
$2^{(1-\alpha)d} \cdot 3^{(\alpha/2)d}$
are increasing and decreasing functions of $\alpha$, respectively,
this maximum is attained when 
$1.234 \cdot e^{\alpha} = 2^{(1-\alpha)} \cdot 3^{(\alpha/2)}$.
By numerical calculation, we have $\alpha \approx 0.42217$, from which
the following theorem follows.

\begin{theorem}
{\bf IncH2} can be solved in randomized $O^{\ast}(1.883^d)$ time
with probability
at least $1-{\frac 1 {n^k}}$, where $k$ is any positive constant.
\label{thm:low-height}
\end{theorem}

The above algorithm may be derandomized by using $k$-perfect hash families as
in~\cite{alon1995color}.
However, since the construction of a $k$-perfect hash family has
a high complexity, the resulting algorithm would have a time complexity much
worse than $O^{\ast}(2^d)$.

\section{Concluding remarks}

We have presented a new algorithm for unordered tree inclusion running in
$O^{\ast}(2^{d})$ time, thus reducing the exponent~$2d$ in the previously
best known bound on the time complexity~\cite{kilpelainen1995ordered} to~$d$.
However, the $2^{d}$-factor may not be optimal.
For example, our randomized algorithm for the special case of $h(P) = 1$ and
$h(T) = 2$ runs in $O^{\ast}(1.883^d)$ time, which suggests that further
improvements could be possible.
However, we were unable to obtain an $O^{\ast}((2-\varepsilon)^d)$-time algorithm
for any constant $\varepsilon > 0$, even when $h(P) = h(T) = 2$.
Similarly, we could not obtain an $O^{\ast}((2-\varepsilon)^d)$-time algorithm for
any constant $\varepsilon > 0$ when $occ(P,T) = 4$.
Therefore, to develop an $O^{\ast}((2-\varepsilon)^d)$-time algorithm for
unordered tree inclusion or to prove an $\Omega(2^d)$ lower bound
(e.g., using recent techniques
from~\cite{abbound2016subtree,abbound2014alignment,bringmann2018editlb} for
proving lower bounds on various tree and sequence matching problems)
is left as an open problem.

Future work includes generalizing our techniques and applying them to
the \emph{extended tree inclusion problem} mentioned in
Section~\ref{sec:applications}.
This problem variant
was introduced by Mori et al.~\cite{MTJHTA_15}
as a way to make
unordered tree inclusion more useful for practical pattern matching
applications.
It asks for an optimal connected subgraph of~$T$ (if any) that can be obtained
by applying node insertion operations as well as node relabeling operations
to~$P$ while allowing non-uniform costs to be assigned to the different
node operations.
It was shown in \cite{MTJHTA_15}
that the unrooted case can be solved in
$O^{\ast}(2^{2d})$ time, and a further extension of the problem that also
allows at most~$K$ node deletion operations can be solved in
$O^{\ast}((ed)^K K^{1/2} 2^{2(dK+d-K)})$ time, where $e$ is the base of
the natural logarithm.

\bibliographystyle{plain}
\bibliography{impinctcsbib}

\end{document}